\newtheorem{theorem}[]{Theorem}
\newtheorem{lemma}{Lemma}
\newtheorem{property}{Property}
\renewcommand{\L}{\mathcal{L}}
\newcommand{\N}{\mathcal{N}}
\newcommand{\Ein}{E^{({\rm in})}}
\newcommand{\Eout}{E^{({\rm out})}}
\newcommand{\Enew}{E^{{\rm sampled}}}
\newcommand{\modif}{}
\title{Spikyball sampling: Exploring large networks via an inhomogeneous filtered diffusion}
\author{Benjamin Ricaud, Nicolas Aspert and Volodymyr Miz}
\begin{document}
\maketitle
\abstract{Studying real-world networks such as social networks or web networks is a challenge. These networks often combine a complex, highly connected structure together with a large size. 
We propose a new approach for large scale networks that is able to automatically sample user-defined relevant parts of a network.
Starting from a few selected places in the network and a reduced set of expansion rules, the method adopts a filtered breadth-first search approach, that expands through edges and nodes matching these properties. Moreover, the expansion is performed over a random subset of neighbors at each step to mitigate further the overwhelming number of connections that may exist in large graphs. This carries the image of a "spiky" expansion. We show that this approach generalize previous exploration sampling methods, such as Snowball or Forest Fire and extend them. We demonstrate its ability to capture groups of nodes with high interactions while discarding weakly connected nodes that are often numerous in social networks and may hide important structures. 
}

\section{Introduction}
Exploring large networks and analysing the activity within them is of crucial importance. For example, social networks have become a central source of information for citizens and have a large impact on society.
A better understanding of the interaction mechanisms between users and better tools for the analysis of the activity within social networks are required.
However, in practice, when exploring social networks, researchers are faced with several challenges. Firstly, the network has an overwhelming size and a high density of connections. Secondly, data collection from social networks, when not explicitly forbidden, is often restricted by APIs that limit the number of queries and their versatility. Thirdly, there is an extremely large number of users that do not bring information that is relevant to solving a research problem at hand. For instance, those users may be inactive, weakly connected, or just follow others and do not interact with the rest of the network. Such users introduce noise to collected datasets and mask the informative activity of studied social networks. Therefore, a trivial uniform sampling approach is ineffective because it does not take into account various attributes of the users and collects a lot of irrelevant data. 

In order to deal with these challenges, we propose a new, general and flexible sampling method based on exploration rules. Our sampling approach takes into account nodes and edges properties. The method enables the collection of certain types of nodes and edges based on their attributes. For instance, it can focus on hubs or influencers in social networks and active users that follow them while ignoring others.
This technique is a generalization and extension of the snowball sampling method~\cite{illenberger2012estimating}. {\modif Instead of having a ball expanding uniformly around initial nodes, our method, called the \emph{Spikyball}, 
\begin{enumerate}
    \item reduces the expansion to a subset of the possible neighbors,
    \item  chooses this subset according to predefined rules.
\end{enumerate}
This is illustrated on Fig.~\ref{fig:spikyballsnowball}.}
Such an approach has a number of benefits. First, exploring a selected subset of connections unlocks the exploration of large, highly connected graphs to distances of several hops. Second, the constrained size of the random subset reduces the number of requests that might need to be made to the API of a social network. Finally, guiding the exploration by rules enables a more efficient exploration, since we collect edges and nodes that are relevant to the problem and discard the rest. In an example of application discussed in this work, the nodes of interest are social network influencers and active spreaders of the information who often have a large number of followers. Hence, to collect the hubs and discard weakly connected users, we present rules that prioritize the degree of nodes.
Starting from one or several initial nodes, the exploration iteratively follows the edges of the graph and expands along them. The subset of edges followed at each step makes it look like a spiky ball rather than a snowball, hence its name.

{\modif

The first contribution of the present work is to show that Spikyball generalizes the family of exploration-based sampling schemes and connects the members of this family, such as Snowball sampling, Forest Fire sampling~\cite{leskovec2005graphs} or graph-expander sampling~\cite{maiya2010sampling}. 

Secondly, our generalization introduces a novelty to the aforementioned approaches. Going beyond their common goal of providing a reduced but faithful representation of a full graph, our sampling approach is tunable and can direct the collection toward nodes or edges sharing a particular property within the network, be it a high degree, or some attribute associated to them. To be more concrete and motivate the choice of functions and parameters, we refer to an example of sampling a social network. In this framework, the goal is to collect the influencers (high degree nodes) among the overwhelming number of weakly connected users. The approach is general and could find applications beyond the exploration of social networks, for instance, in large graphs where the task is to capture a few key nodes that remain otherwise hidden due to the massive size of the network.
Indeed, our method departs from the usual objective of graph sampling and unlocks the exploration of \emph{regions} within large, highly connected graphs.
Most of the studies on graph sampling focus on the faithfulness of the sampled graph compared to the original one, in terms of global graph properties (degree distribution, clustering coefficient, communities, etc.). Under such metrics, a good representation usually requires to collect at least 20 to 30\% of the graph~\cite{doerr2013metric}, which is prohibitive in real social networks. 
Our approach is different: our goal is either to collect a faithful representation of the subset of key nodes (e.g. with high degree) or to get a faithful \emph{neighborhood} of a given node or a group of nodes and not of the entire graph. Doing so greatly reduces the amount of data to process while providing essential information about the chosen subset or region of the graph. 
The obtained sampled graph is especially useful for visualizing information: sampling strategies have an important impact on the visualization of a graph~\cite{wu2016evaluation} and high degree nodes are essential for a good quality of visualization.

Thirdly, we provide an in-depth analysis of the Spikyball parameters and their impact on the sampling of synthetic and real networks. We also demonstrate the robustness of the sampling: even though the exploration of the neighborhood is partly random, it collects with high probability the key influencers around a group of users. In the case of a real social network, we show that hubs (with degree larger than 20) are sampled with a high probability, by more than 80\% of the runs we performed and even 100\% for nodes with degree higher than 100. Eventually, we show that some Spikyball parameters have no effect on the sampling of artificial random networks, while having one in the case of real networks. We suggest new uses of these samplings for assessing the structure of a real network by comparing the degree distribution obtained with different parameters.
}

{\modif 
\begin{figure}[htb]
\centering
\includegraphics[width=\textwidth, trim={0cm 1.3cm 0cm 1.3cm}, clip]{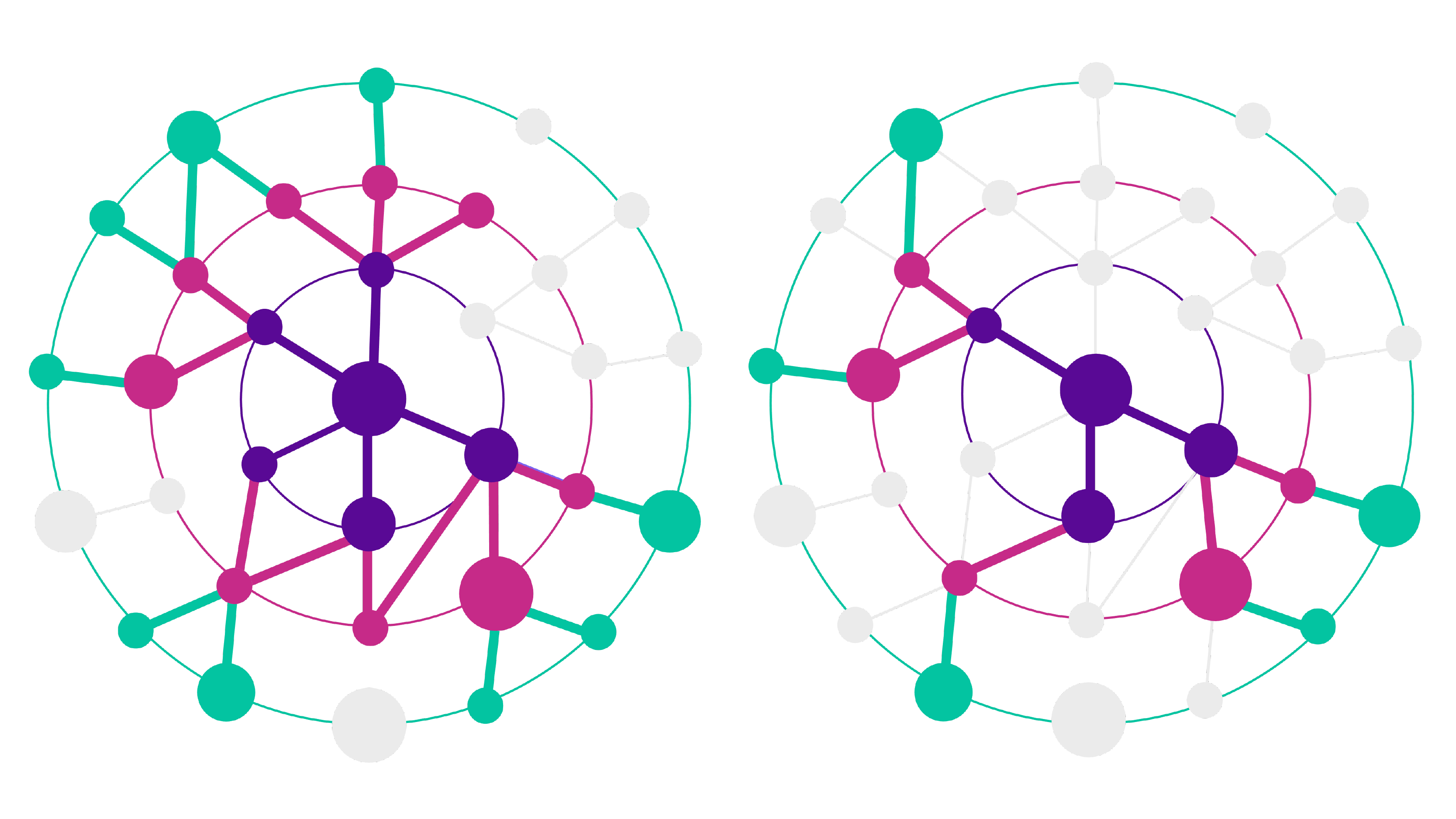}  
\caption{Snowball (left) and Spikyball (right) sampling example. The sampled nodes are colored in purple, pink, and green, the non-sampled ones are in grey. Starting from the central node (user defined), both samplings expand in successive hops, following neighbor connections (purple circle is 1-hop, pink circle is 2-hop, and green circle is 3-hop). The size of the nodes symbolizes their importance and the spikyball focus on collecting them in priority. This importance can be related to their degree or to some other attribute associated to them.}


\label{fig:spikyballsnowball}
\end{figure}   
}

\section{Related work}
There are two main families of graph sampling methods adopting a neighbor exploration strategy. The first family is based on random walks (RWs). Examples are the re-weighted random walk, the Metropolis-Hasting random walk~\cite{gjoka2011practical}, CNRW and GNRW~\cite{zhou2015leveraging}, CNARW~\cite{Li2019walking} or Frontier Sampling~\cite{ribeiro2010estimating}. The general principle is to guide random walks with probabilities assigned to edges. These probabilities define which neighbors to visit as the exploration progresses. In the re-weighted and Metropolis-Hasting RWs, the goal is to reduce potential biases toward high degree nodes and obtain a faithful reduced graph. The assessment criteria are degree distribution and graph properties such as diameter or clustering coefficient. For the CNARW, rules are introduced to push the exploration to go farther at each step, favoring a faster sampling of the graph. Frontier Sampling uses multiple RWs in parallel.

The second family of graph crawling algorithms starts from an initial node or group of nodes and expands around it either with a breadth-first search (BFS) or depth-first search (DFS) approach. Examples are Snowball sampling~\cite{goodman1961snowball,illenberger2012estimating}, where close regions around the initial starting point are densely sampled while regions that are located further away from it may not be sampled at all. Moreover, in large, densely connected, networks, snowball gets quickly surrounded by the overwhelming number of neighbors. Forest Fire~\cite{leskovec2005graphs} and the Expander graph approach~\cite{maiya2010sampling} provide solutions for large networks. These approaches achieve scalability by following a reduced random subset of the edges during the expansion.
The Spikyball approach unifies these latter samplings into a unique framework and adds new exploration options. Its tunable rules enable the preferential collection of nodes and edges with specific properties.

More precisely, what distinguishes the methods within the second family are the spreading rules. The snowball sampling is a breadth-first search with a stopping parameter. At each iteration, the sampling propagates to all the neighbors and is stopped after $N$ iterations. For the Forest Fire, the propagation takes place on a subset of the edges selected randomly. For each newly burned (or explored) node, a number $x$ of its neighbors is selected to spread the "fire" further. This random number follows a geometric distribution and is independent of the number of neighbors. This independence decreases the influence of highly connected nodes (possessing many edges where the fire could propagate). This is desirable in the context of graph sampling when the sampled graph must have a degree distribution as close as possible to the one of the full graph and Forest Fire is designed for this purpose~\cite{leskovec2006sampling}. However, this random selection may fail at collecting a set of nodes with particular properties that are considered important for the research problem at hand.
In the Expander graph method~\cite{maiya2010sampling}, the subset of neighbors to spread to is selected according to the target node connections. The nodes giving the largest increase in the number of neighbors of the sampling are collected. The rationale is to maximize the chance to collect new communities. However, estimating the optimal nodes to spread to needs costly requests. For each candidate, all its connections have to be requested and this can result in a computationally intractable solution when dealing with applications that involve social media. Furthermore, this also leads to a partial sampling of the core nodes of a community. Indeed, non-sampled nodes with strong connections to the already sampled nodes, even with high degree, may not be collected.
In the Rank Degree method~\cite{voudigari2016rank}, the exploration rule is slightly different, the degree of the neighbors guides the propagation. Furthermore, the initial set may contain many nodes taken at random, while the previous sampling uses a single or a few starting points.

Studies on graph sampling assess the quality of the sampling using various criteria from the degree distribution shape, topological properties such as diameter, clustering coefficient~\cite{leskovec2006sampling,hubler2008metropolis,gojka2010walking} to spectral properties of the graph Laplacian~\cite{loukas2018spectrally}. 
These criteria are highly relevant for general graphs. However, in the particular case of social networks, other constraints on the sampling process are of central importance. Firstly, the collection of influencers, spreaders and other active users may be more important than having a faithful representation of the graph in terms of diameter or clustering coefficient. Secondly, a large number of users mask the important activity without bringing much relevant information (apart from highlighting its virality). Thirdly, the collection via proprietary APIs provides reduced access to the network and it is a key factor limiting the sampling. The performance of RWs explorations under this constraint has been studied in~\cite{iwasaki2018comparing}, showing important discrepancies between approaches.
In~\cite{de2010does}, a well-chosen initial set of users, related to the topic of interest (e.g. a political event or a technology topic), improved the results of the Forest Fire sampling, allowing to follow and measure the diffusion of information. The Spikyball carries the same idea but extends the focus from the initial group of users to the exploration rules.
The sampling biases of existing sampling schemes are studied in~\cite{maiya2011benefits} and are shown to have some benefits for particular applications, for example, discovering new communities. This is an additional argument toward the utility of new exploration approaches having flexible exploration rules, such as ours.


\section{Proposed Method}\label{sec:methods}

{\modif
In this section, we first describe the general method. We then explain in details the exploration rules in Sec.~\ref{sec:samplingrules}, with the different parameters and their influence on the sampling. We introduce names for samplings with particular parameter values, to better distinguish them in the rest of the paper. Finally, we show the connection with the main sampling methods found in the literature in Sec.~\ref{sec:connectionsamplings}.
}

Let $\L_0$ denote the set of initial nodes from where the sampling starts. This is the initial layer. The algorithm is based on the exploration of the graph in successive layers $\L_1,\L_2,\L_3\dots$, built around the initial one and expanding along the graph edges. Each layer is a set of nodes, that have been obtained from the expansion process applied at the previous layer. Different parameters set the expansion rules, selecting subsets of edges to follow, from one layer to the next. The expansion rules are the crucial ingredient that determines the final properties of the sampled graph. The main algorithm is shown on Algo.~\ref{algo:spikyball}. In the case of an attributed graph, the exploration may depend on the graph structure as well as the properties associated to the nodes and edges of the graph. 

\begin{algorithm}[H]
 \KwData{Initial set of nodes $\L_0$, number of layers $K$, graph to sample $G$}
 \SetKwData{Edges}{$E_k$}
 \SetKwData{Nodesinfo}{nodes-info}
 \SetKwData{Inedges}{$\Ein_k$}\SetKwData{Outedges}{$\Eout_k$}
 \SetKwData{Newedges}{$\Enew$}
 \SetKwFunction{GetNeighbors}{GetNeighbors}
 \SetKwFunction{GetNodesInfo}{GetNodesInfo}
 \SetKwFunction{FilterEdges}{FilterEdges}
 \SetKwFunction{SampleEdges}{SampleEdges}
 \SetKwFunction{Union}{Union}
 \SetKwFunction{AddToGraph}{AddToGraph}
 \KwResult{A graph $G_s$ sampled from $G$ }
 initialization $\L_k\leftarrow \L_0$, $G_s\leftarrow \varnothing $, $\L_T \leftarrow \L_0$\;
 \For{ $k\leftarrow 0$ \KwTo $K-1$ }{
  \Nodesinfo $\leftarrow$ \GetNodesInfo($\L_k$)\;
  \Edges $\leftarrow $ \GetNeighbors($\L_k$)\;
  \Inedges, \Outedges $ \leftarrow$ \FilterEdges(\Edges, $\L_T$)\;
  $\L_{k+1},\Newedges \leftarrow$  \SampleEdges(\Outedges, \Nodesinfo)\;
  $G_s \leftarrow$  \AddToGraph($G_s$, $\L_k$, \Inedges, \Newedges,\Nodesinfo)\;   
  $\L_T\leftarrow \Union(\L_T, \L_{k+1}$)\;
 }
 \caption{Spikyball algorithm}\label{algo:spikyball}
\end{algorithm}

The first two functions of the algorithm access and collect the information from the network. The function \GetNodesInfo is the one retrieving the information associated to the nodes within each layer $\L_k$. This information may be required when dealing with an attributed graph, in the case where the sampling is guided by the attributes. While this algorithm concerns the general case of graphs having attributes, it can be used without modifications when no attributes are present. In this case, the \GetNodesInfo will always return an empty set. The function \GetNeighbors collect the edges incident to the nodes in $\L_k$ as well as the data attached to these edges if available. Both functions (\GetNodesInfo and \GetNeighbors) will make $|\L_k|$ requests, with $|\L_k|$ being the number of nodes in $\L_k$. In many cases, for graphs without attributes or for some networks allowing to join node and edge requests, \GetNodesInfo is not needed and the number of requests will be divided by 2. We assume that the information about the neighbors is not collected, except their node id which is encoded in the edges. Indeed, collecting information from the neighbors would require an additional query for each neighbor, which can be quickly prohibitive and limited in the case of a social network. 

When the raw data has been collected from the network, it is further processed by three different functions. The edges are sorted by \FilterEdges between edges connecting the source nodes to nodes already collected in previous layers $\Ein_k$ and the edges pointing to new nodes $\Eout_k$. Furthermore, this function can be used to remove edges according to some criteria, for example, if the weight of an edge is smaller than a given value set by the user.
The function \SampleEdges selects the edges to follow from the set provided by \FilterEdges and outputs the nodes that will form the new layer, ready for the next exploration step. The exploration rules are encoded inside this function and are explained in more details in Sec.~\ref{sec:samplingrules}.
To perform the sampling, \SampleEdges can take into account the data collected from the nodes in $\L_k$ and the data associated to the edges. 
The last function, \AddToGraph, adds the new nodes and their connections (possibly with their attributes) inside $G_s$. Eventually, the union of two sets is performed with \Union to update the list of sampled nodes $\L_T$.

\subsection{Exploration rules}\label{sec:samplingrules}

The edge sampling rules are encoded in the function \SampleEdges which takes 2 input arguments. The first one is the list of edges with their properties, $\Eout_k$, that connect nodes from $\L_k$ to their neighbors not already sampled (not in $G_s$). The second one is the data associated to the nodes in $\L_k$ that may influence the selection of the edges. Within the function is performed a selection of edges to explore, among the one in $\Eout_k$. The target nodes of the selected edges will be the elements of $\L_{k+1}$, to explore in the next step $k+1$. Several exploration schemes can be defined with different sets of rules and different properties. The key element is a probability mass function $p_k$ associated to the set of edges $\Eout_k$ that guides the choice of edges to follow from the layer $k$ to the next. This can also be seen as a conditional probability $p_k(j|i)$ of choosing $j$ at layer $k+1$ if $i$ has been collected at layer $k$. 

For the Spikyball, this probability mass function depends on the properties associated to the edge and its source and target nodes. The influence of these properties can be tuned and lead to different propagation schemes. We introduce three real numbers $\alpha,\beta, \gamma$ and three functions $f,g,h$. These functions map the feature space of the source node, edge and target node respectively to positive real numbers.
We have, at layer $k$,
\begin{equation}\label{eq:proba}
   p_k(e_{ij})= p_k(j|i) = \frac{f(i)^\alpha g(i,j)^\beta h(j)^\gamma}{s_k},  
\end{equation}
where $s_k$ is the normalization depending on the number of nodes at layer $k$ and their neighbors $\{\N(i)\}_{i\in\L_k}$:
\begin{equation}\label{eq:mormalization}
   s_k = \sum_{i\in\L_k}\sum_{j\in \N(i)}f(i)^\alpha g(i,j)^\beta h(j)^\gamma.  
\end{equation}
The normalization varies with the layer as there is a different number of nodes and neighbors at each step, however, the mappings from features to positive numbers are independent of the layer. In some cases, the features may depend on the layer. For example, the number of connections of a node that are not connecting it to nodes already collected in previous layers (see below).

There are many possibilities for defining $f,g,h$ and the exponents. In the present work, we focus on a few cases that are representative of the general approach, and of potential interest for the exploration of social networks.
In the following, we denote by $w_{ij}$ the weight associated to edge $e_{ij}$ between node $i$ and node $j$. The weighted degree of node $i$ is $d_i$ and $d_i^{(out)}$ is the sum of weighted connections from node $i$ to nodes of $G$ not in $G_s$. We also introduce $d_j^{(in)}$, the number of connections of node $j$ from nodes in layer $k$. Note that $d_i^{(out)}$ and $d_j^{(in)}$ both depend on layer $k$, although omitted from the notation for simplicity. The number of edges in $\Eout_k$ is denoted $N_k = |\Eout_k|$.

\noindent{\bf Spikyball:} This is the general setting. A random subset of the edges is selected at each layer $k$ following the probability mass function of Eq.~\eqref{eq:proba}.
If the graph is attributed, the probability to pick an edge can depend on the data, on the edge, and on the connected nodes via the functions $f,g,h$. For example, in a social network, one may want to sample more edges from highly active users. The number of posts of these users could be the value (node data) influencing the probability to sample the edge.
If the graph does not have any data associated to the nodes or edges, the functions take the natural properties of edges (weight) and nodes (degree) which reduces the probability function to:
\begin{equation}\label{eq:probaspiky}
   p_k(e_{ij})= \frac{d_i^{(out)\alpha} w_{i,j}^\beta d_j^{(in)\gamma}}{s_k}.  
\end{equation}

\noindent{\bf Uni-edge ball:}
The edges are selected uniformly at random without replacement, $\alpha,\beta,\gamma$ are zero and
\begin{equation}
    p_k(e_{ij})= \frac{1}{s_k}=\frac{1}{N_k^{(out)}}.
\end{equation}
Optionally, the selection may take the edge weights into account by defining a probability mass function proportional to the weights, $\beta=1$, $g(i,j)=w_{ij}$:
\begin{equation}
    p_k(e_{ij}) = \frac{w_{ij}}{s_k},\ \mathrm{ with}\quad  s_k=\sum_{(i,j)\in\Eout_k} w_{ij}.
\end{equation}

\noindent{\bf Uni-node ball} or {\bf Fireball:}
In this configuration, the source nodes are selected uniformly at random without replacement, hence $\beta,\gamma$ are zero and $\alpha=-1$
\begin{equation}
    p_k(e_{ij})= \frac{1}{d_i^{(out)}}\frac{1}{s_k}.
\end{equation}
As for the uni-edge ball, optionally, the selection may take the edge weights into account with $\beta=1$ and $g(i,j)=w_{ij}$. The name Fireball refers to its similarity with the Forest Fire sampling (see next section for more detailed explanation).

\noindent{\bf Hubball family:}
This family relies on the degrees of the nodes in $\L_k$. The probability is the one of Eq.~\eqref{eq:probaspiky} where $\beta=1$ $\gamma=0$:
\begin{equation}
     p_k(e_{ij}) = d_i^{(out)\alpha}\frac{w_{ij}}{s_k}.
\end{equation}
It is a one-parameter family that favors the selection of connections originating from hubs when $\alpha > 0$. Note that it has an influence on the outbound connections from highly connected nodes but not on the nodes themselves: the nodes in $\L_k$ have already been chosen and the choice is independent of the degree of the target nodes.

\noindent{\bf Coreball family:} This sampling aims at capturing the core of the communities surrounding or including the initial nodes. This is in reference to the $k$-core measure that evaluate the hierarchy within a community and its core users. The probability relies on the degree of the target nodes. In Eq.~\eqref{eq:probaspiky}, $\alpha=0$, $\beta=1$, while $\gamma$ is kept free:
\begin{equation}
     p_k(e_{ij}) = \frac{w_{ij}}{s_k}d_j^{(in)\gamma}.
\end{equation}
For $\gamma>0$, it favors the exploration of the core of a community, by collecting with a higher probability the nodes that are the most connected to the ones in layer $\L_k$. However, keeping a non-zero probability for connections with weakly connected nodes, the sampling gets a chance to explore outside of the already captured communities.

\noindent{\bf Remark:} the Hubball with $\alpha=0$, the Coreball with $\gamma=0$ and the Uni-edge ball have the same probability distribution and are therefore the same sampling scheme. The Hubball with $\alpha=-1$ is the Uni-node ball.


\subsection{Connection to existing exploration samplings}\label{sec:connectionsamplings}
\noindent{\bf Snowball:} This sampling is the simplest one in terms of expansion rules. This is a particular case of the spikyball where all the edges are selected and the next layer contains all the neighbors of the nodes in $\L_k$. The probability mass function is not relevant here.  

\noindent{\bf Forest Fire  and Fireball:} The Spikyball is close to the one performed with the Forest Fire approach~\cite{leskovec2005graphs}, where the random selection of edges to burn does not depend on the degree of the node it connects to (or from). To obtain the same behavior we define a particular spikyball called Fireball.
To keep the same notation to the one introduced  in~\cite{leskovec2005graphs} (see Appendix A.1 therein), the number of edges to select from $\Eout_i$ is $n*p_f/(1-p_f)$ where $p_f$ is the forward burning probability and $n$ is the number of nodes in the Fireball layer.
In this Fireball configuration, each source node will have an equal probability to be selected as for all nodes $i$ in layer $k$, summing over its neighbors $\Omega_k(i)=\N(i)\cap\Eout_k$ leads to a uniform probability:
\begin{equation}
    \sum_{j\in \Omega_k(i)} p(e_{ij}) = \frac{1}{s_k}\sum_{j\in\Omega_k(i)} \frac{w_{ij}}{d_i^{(out)}}= \frac{1}{s},
\end{equation}
with $w_{ij}=1$ if $i$ and $j$ are connected and zero otherwise. Notice that there may be a difference between Fireball and Forest Fire. In Forest Fire, a random number $x$ is drawn for each node $i$. If the node possesses fewer edges than this number $d_i^{(out)}<x$, all edges are selected. In this case, it acts as a random selection with repetition. However, when $x\ge d_i^{(out)}$, it is a selection without repetition. We can not obtain this behavior with our layered approach. In practice, it should make little difference. However, on graphs where a large number of nodes with small $d_i^{(out)}$ are encountered the difference may have a visible impact.

\noindent{\bf Expander-Graph Ball}: In this scheme~\cite{maiya2010sampling}, the nodes selected at step $k+1$ are the ones that increase the number of neighbors of $G_s$ the most. In that case, the number of connections of the target node to nodes that are not in $G_s$, $d_j^{(out)}$ is the feature used in $p_k$. So that $\alpha=0$, $\beta=0$ and $\gamma>0$. The functions $f,g$ are ignored and $h(j)=d_j^{(out)}$. If one needs to enforce the expander-graph behavior, it can be achieved by increasing $\gamma$. In order to obtain $d_j^{(out)}$, the neighbors of the neighbors of nodes in $L_k$ have to be requested and it may be prohibitive for large, highly connected, networks. Although the functions are similar to the ones of the Coreball, it differs by using ($d_i^{(in)}$ instead of $d_i^{(out)}$).

\section{Theoretical properties}
In this section, we establish some important properties of the Spikyball family that help to understand the general behavior of these sampling methods. The influence of the parameters on the degree distribution of the collected nodes is made explicit and will be discussed in more details.

{\modif 
There are three main results in this section. We first prove that the degree distribution of samplings obtained with the Hubball family is independent of the parameter $\alpha$, for a large number of synthetic networks. Since this family contains the Forest Fire sampling, we prove that this latter sampling, on most random networks, will lead to an equivalent degree distribution as any Hubball defined in the previous section. The reader has to keep in mind that it may not be the case for real networks. This is due to the independence of the degree of a node with respect to the degree of its neighbors in most random networks.
The second result goes further in this direction and shows explicitly the dependence of the degree distribution of the sampled graph on the relationship between neighbor nodes. Again, this is for the case of Hubballs. In real networks, the degree of neighbor nodes may be related. For example, high degree nodes may favor connections with other high degree nodes. In that case, it is more probable to find a high degree node when randomly jumping from a high degree node to one of its neighbors. Theorem~\ref{thm:th2} shows that such relation between the degree of nodes can be revealed indirectly by analysing the difference of degree distribution given by the members of the Hubball family.

The last result concerns the Coreball family and the influence of its parameter $\gamma$ on the sampling. For any graph, a large $\gamma$ will make the Coreball sample more high degree nodes than what the Snowball would do. For negative $\gamma$, more weakly connected nodes will be sampled.
}

For the sake of simplicity, we assume in this section the graphs to be unweighted. In order to establish the first result, we introduce a property of a graph defined as follows:
\begin{property}[P1]
 The degree $d_a$ of a node $a$ is independent of the degree of its neighbors. The conditional probability $p(d_a|d_b)$ of having a node $a$ with degree $d_a$ if its neighbor $b$ has degree $d_b$ is:
 \begin{equation}
     p(d_a|d_b) = p(d_a).
 \end{equation}
\end{property}
This property is shared by many random networks, where the creation of edges is independent of the degrees of both the source and target nodes. This is true for Erd\H{o}s-Renyi (independent of any degree) or for Barab\'asi-Albert graphs (independent of the degree of the source node). However, this may not hold for real networks. For example, in networks with a strong hierarchy, having a large $k$-core, high degree nodes are more connected together than to small degree nodes at the periphery. 

\subsection{Hubball family}
The first theorem is a direct consequence of property P1.
\begin{theorem}
The degree distributions of the nodes collected with the Hubball family for any $\alpha$ on a graph $G$ with property P1 are all equivalent.
\end{theorem}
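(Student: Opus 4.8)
The plan is to exploit the fact that, for unweighted graphs, the Hubball probability mass function $p_k(e_{ij}) = d_i^{(out)\alpha}/s_k$ factorizes cleanly into a source-selection step that carries \emph{all} of the $\alpha$-dependence, followed by an $\alpha$-independent choice of target. First I would rewrite it, using $s_k = \sum_{i\in\L_k} d_i^{(out)(\alpha+1)}$ (which follows from Eq.~\eqref{eq:mormalization} with $w_{ij}=1$ and $|\Omega_k(i)|=d_i^{(out)}$), as
\[
   p_k(e_{ij}) = \underbrace{\frac{d_i^{(out)(\alpha+1)}}{s_k}}_{=:\,P_k(\mathrm{src}=i)}\cdot\underbrace{\frac{1}{d_i^{(out)}}}_{=\,P(\mathrm{tgt}=j\mid\mathrm{src}=i)}.
\]
This exhibits the edge draw as a two-stage process: pick the source node $i$ with probability $P_k(\mathrm{src}=i)$ (which depends on $\alpha$), then pick the target uniformly among the $d_i^{(out)}$ not-yet-collected neighbors of $i$ (which does not). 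The crucial bookkeeping point is that $\sum_{i\in\L_k} P_k(\mathrm{src}=i)=1$ for every value of $\alpha$.

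Next I would compute the probability that a node collected when expanding layer $k$ has degree $d$, by conditioning on the chosen source:
\[
   P_k(d_{\mathrm{tgt}}=d) = \sum_{i\in\L_k} P_k(\mathrm{src}=i)\,\frac{\bigl|\{\,j\in\Omega_k(i): d_j=d\,\}\bigr|}{d_i^{(out)}}.
\]
The inner factor is exactly the fraction of out-neighbors of $i$ whose degree equals $d$. This is where property P1 enters: because the degree of a neighbor is independent of the degree of $i$, namely $p(d_j\mid d_i)=p(d_j)$, the expected value of this fraction equals the marginal neighbor-degree law $p(d)$ and is the same for every source $i$, irrespective of $d_i$. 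Pulling the resulting constant $p(d)$ out of the sum and using $\sum_i P_k(\mathrm{src}=i)=1$ gives $P_k(d_{\mathrm{tgt}}=d)=p(d)$, which is manifestly free of both $\alpha$ and $k$.

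Finally I would assemble the global claim: every node added during expansion, at any layer and for any $\alpha$, is drawn from the same degree law $p(d)$, so the degree distribution of the collected set is a mixture of identical components and is therefore itself $\alpha$-independent; the only exception is the user-fixed initial layer $\L_0$, whose contribution is common to all $\alpha$. I expect the main obstacle to be justifying the invocation of P1 under the ``out'' restriction: the target is chosen only among neighbors not already in $\L_T$, and one must check that discarding the already-collected nodes does not reintroduce a dependence between source degree and target degree. I would handle this by reading the fraction above as an expectation over the graph ensemble satisfying P1, where independence of $d_j$ from $d_i$ survives the conditioning that $j$ is still available, so the conditional target-degree law stays source-independent. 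A secondary point to settle is the precise meaning of ``equivalent'': the cleanest reading is equality of the per-layer degree laws, and hence of the asymptotic empirical distribution of the collected nodes, rather than equality of every finite mixture weight, since the number of nodes gathered per layer may itself vary with $\alpha$.
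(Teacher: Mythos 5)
Your proposal is correct and follows essentially the same route as the paper: the paper's (much terser) proof likewise observes that the $\alpha$-dependence enters only through the weighting of the source nodes in $\L_k$, and that by P1 the degree of a selected neighbor is independent of the source degree, so the collected degree distribution is unchanged. Your two-stage factorization of $p_k(e_{ij})$ into an $\alpha$-dependent source draw and an $\alpha$-independent target draw, and your explicit handling of the ``out'' restriction, are simply a more careful write-up of that same argument.
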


This result is interesting as it makes it possible to distinguish an artificial random graph (Having property P1) from a real network by inspecting the sampled nodes with several Hubballs. It also helps understand the limits of relying on the source node degrees for sampling a graph. This effect is illustrated on Fig.~\ref{fig:degreedistributionBA} in the experimental part. Indeed, similar degree distributions are obtained for the Hubball family (Fireball, Uni-edge ball and Hubball 2) on random graphs.

\begin{proof}
At layer $k$ of the sampling process, the neighbors of the nodes of this layer are selected according to the degree of the layer nodes: $d_i^{(out)}$ for $i\in\L_k$. Since the degree of neighbors is independent of the degree of the nodes in $\L_k$ by P1, this has no influence on the degree distribution of the selected neighbors.
\end{proof}

The next result is more complex and more precise regarding the action of the Hubball family over the degree distribution.
Let us denote by $p(d_a)$ the probability of selecting a node with degree $d_a$, and $p(d_a|d_b)$ the conditional probability of having a node with degree $d_a$ if its neighbor has a degree $d_b$. 
The layer $k$ of a Spikyball sampling contains $n_k$ nodes. The degree distribution of these nodes is related to the graph and the sampling process. In order to understand more precisely this relationship, we model the evolution of the degree distribution from layer to layer. We assume that the degree of each node at layer $k$ is a random variable that has been selected from a degree distribution $q_k$ that depends on the layer.
For the following results, we assume the Spikyball sampling to occur on a large graph. With a high number of nodes in the graph, the number of times a node with degree $d_a$ appears at layer $k$, with $n_k$ nodes, is well approximated by a binomial distribution. The following Lemma creates a first relationship between the probability to select a node with a given degree with the exploration rules.
\begin{lemma}\label{lemma:degreek}
Assuming a large graph, the probability of selecting a node with degree $d_a$ at layer $k$ of a spikyball sampling is given by:
\begin{equation}\label{eq:probaselect}
     q_k^{(s)}(d_a) = \sum_{n=1}^{n_k} p_k(e_{aj})d_a B(n,n_k,q_k(d_a)),
\end{equation}
where $B(n,n_k,q_k(d_a))$ is a binomial distribution associated to obtaining $n$ nodes of degree $d_a$ in $n_k$ trials from the degree distribution $q_k$ at layer $k$. Through its normalization, $p_k(e_{aj})$ depends on $n$, the number of nodes with degree $d_a$ in layer $k$, as $s_k = \sum_{i=1}^{n_k} d_i^\alpha$.
\end{lemma}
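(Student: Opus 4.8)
The plan is to separate the two independent sources of randomness that feed into $q_k^{(s)}(d_a)$: the random composition of layer $k$ (how many nodes of each degree happen to sit in $\L_k$), and the random edge selection carried out on top of a given composition. First I would make the large-graph assumption precise by modelling the degrees of the $n_k$ nodes of $\L_k$ as independent draws from the layer distribution $q_k$. Under this model the number of nodes in $\L_k$ whose degree equals $d_a$ is a sum of $n_k$ independent Bernoulli$(q_k(d_a))$ indicators, hence is governed by the binomial law $B(n,n_k,q_k(d_a))$. This is exactly the factor appearing in Eq.~\eqref{eq:probaselect}, and it is what the outer sum over $n=1,\dots,n_k$ averages against.

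Next I would compute, conditionally on a composition of $\L_k$ containing $n$ nodes of degree $d_a$, the probability that the edge sampling lands on one of them. For the Hubball family the per-edge probability specialises (via $\beta=1$, $\gamma=0$, $w_{ij}=1$ in Eq.~\eqref{eq:probaspiky}) to $p_k(e_{aj})=d_a^{\alpha}/s_k$. A degree-$d_a$ node offers $d_a$ incident edges, each carrying this mass, so summing over those edges produces the factor $p_k(e_{aj})\,d_a=d_a^{\alpha+1}/s_k$ that appears in the statement. Assembling this conditional selection probability with the binomial weight $B(n,n_k,q_k(d_a))$ and summing over the possible counts $n$ then yields Eq.~\eqref{eq:probaselect}.

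The step that needs the most care — and where the large-graph hypothesis does the real work — is the treatment of the normalisation $s_k=\sum_{i=1}^{n_k}d_i^{\alpha}$. Strictly, $s_k$ depends on the degrees of \emph{every} node in the layer, not merely on the count $n$ of degree-$d_a$ nodes, so the conditional selection probability is a priori a function of the whole configuration rather than of $n$ alone. The plan is to split $s_k$ into the contribution $n\,d_a^{\alpha}$ of the tagged degree class and the remaining background sum, and to argue that in the large-graph regime the background is self-averaging, so that the conditional probability depends on the configuration only through $n$. This is precisely the approximation that legitimises writing $p_k(e_{aj})$ inside the sum over $n$ with its $n$-dependence carried solely by $s_k$, as the statement asserts. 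A secondary and minor point is that selecting several edges incident to the same node makes the per-node selection probability slightly sub-additive relative to the sum of its per-edge masses; I would argue that this correction vanishes in the same limit, so that multiplying the per-edge probability by $d_a$ is justified. The remaining bookkeeping of the $n$-dependencies is routine once these two approximations are in place.
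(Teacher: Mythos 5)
Your proposal follows essentially the same route as the paper's proof: decompose via the law of total probability over the event $C_n$ that $n$ nodes of degree $d_a$ lie in $\L_k$, identify $p(C_n)$ with the binomial weight $B(n,n_k,q_k(d_a))$, and obtain the conditional selection probability from the Spikyball rules as the per-edge mass $p_k(e_{aj})$ summed over the $d_a$ incident edges. Your additional care about the $n$-dependence of the normalization $s_k$ (splitting off $n\,d_a^{\alpha}$ and invoking self-averaging) is not part of the paper's proof of the lemma itself --- the paper keeps $s_k$ exact here and performs that $\overline{s_k}$ approximation only later, in the proof of Theorem~\ref{thm:th2} --- but this is a refinement of, not a departure from, the same argument.
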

\begin{proof}
This probability $q_k^{(s)}(d_a)$ combines the probability of having a node with degree $d_a$ in $\L_k$ together with the probability to select it via the exploration rules. Let $S$ denote the event of selecting a node with degree $d_a$ in layer $k$, and $C_n$ the event: $n$ nodes of degree $d_a$ are present in the layer $k$. The probability $q_k^{(s)}(d_a)$ is given by the relationship:
$$q_k^{(s)}(d_a) = \sum_{n=1}^{n_k} p(S \cap C_n) = \sum_{n=1}^{n_k} p(S|C_n) p(C_n).
$$
The probability $p(C_n)$ is given by a binomial distribution where the probability of a success, $q_k(d_a)$, is the probability of selecting a node with degree $d_a$.
The conditional probability $p(S|C_n)$ is given by the Spikyball rules. For $C_1$, $p(S|C_1)=p_k(d_a)= p_k(e_{aj})d_a$ is the probability to select a node with degree $d_a$.
\end{proof}

\begin{theorem}\label{thm:th2}
Let G be a large graph being sampled using a member of the Hubball family. Let $q_k$ be the degree distribution associated to the sampling at layer $k$. The number $n_k$ of nodes in layer $k$ is assumed to be large. There exists a small $\varepsilon>0$ such that the degree distribution of the sampled nodes at layer $k+1$  is given by:
\begin{equation}
     q_{k+1}(d_a) = \sum_b p(d_a|d_b) \left(\frac{d_b^{\alpha+1}}{\overline{s_k}} n_k q_k(d_b) + \varepsilon\right),
\end{equation}
with $\overline{s_k}=n_k\overline{d^{\alpha}}$ and $\overline{d^{\alpha}}$ is the mean value of the degree to the power $\alpha$ when $n_k$ nodes are drawn with the degree distribution $p_k$. The bound $\varepsilon$ depends on $n_k$ and $q_k(d_b)$ and decreases with  $q_k(d_b)$.
\end{theorem}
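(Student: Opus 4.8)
The plan is to trace the degree distribution from layer $k$ to layer $k+1$ by combining the selection probability $q_k^{(s)}$ furnished by Lemma~\ref{lemma:degreek} with the neighbor-degree correlation $p(d_a\mid d_b)$, and then to linearize the random normalization $s_k$ around its mean. First I would note that a node $a$ of degree $d_a$ joins $\L_{k+1}$ exactly when some edge $e_{ba}$ with source $b\in\L_k$ is selected. Conditioning on the degree $d_b$ of that source, the event factorizes: the source of degree $d_b$ is picked with probability $q_k^{(s)}(d_b)$, and, given a degree-$d_b$ source, its reached neighbor has degree $d_a$ with conditional probability $p(d_a\mid d_b)$. Summing over all possible source degrees gives the decomposition
\begin{equation}
  q_{k+1}(d_a) = \sum_b p(d_a\mid d_b)\, q_k^{(s)}(d_b),
\end{equation}
which reduces the theorem to evaluating $q_k^{(s)}(d_b)$ in closed form.

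Next I would substitute the Hubball weight $p_k(e_{bj}) = d_b^{\alpha}/s_k$ (unweighted case, $w_{ij}=1$) into Lemma~\ref{lemma:degreek}. When $n$ nodes of degree $d_b$ are present in the layer, each contributes $d_b$ outgoing edges of selection probability $d_b^{\alpha}/s_k$, so the probability of landing on a degree-$d_b$ source given the event $C_n$ carries the factor $n$, namely $p(S\mid C_n)= n\, d_b\cdot d_b^{\alpha}/s_k = n\,d_b^{\alpha+1}/s_k$. Lemma~\ref{lemma:degreek} then reads
\begin{equation}
  q_k^{(s)}(d_b) = \sum_{n=1}^{n_k} \frac{n\, d_b^{\alpha+1}}{s_k(n)}\, B(n,n_k,q_k(d_b)),
\end{equation}
where the normalization $s_k(n)=\sum_{i=1}^{n_k} d_i^{\alpha}$ is itself a random quantity depending on the composition of the layer, and in particular on $n$.

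The central step is to replace the fluctuating $s_k(n)$ by the deterministic value $\overline{s_k}=n_k\overline{d^{\alpha}}$, where $\overline{d^{\alpha}}$ is the mean of $d^{\alpha}$ under the layer distribution. Since $s_k$ is a sum of $n_k$ independent contributions $d_i^{\alpha}$, it concentrates around $\overline{s_k}$ with relative fluctuations of order $n_k^{-1/2}$ by the law of large numbers; absorbing the resulting discrepancy into an additive term $\varepsilon=\varepsilon(n_k,q_k(d_b))$ yields
\begin{equation}
  q_k^{(s)}(d_b) = \frac{d_b^{\alpha+1}}{\overline{s_k}} \sum_{n=1}^{n_k} n\, B(n,n_k,q_k(d_b)) + \varepsilon = \frac{d_b^{\alpha+1}}{\overline{s_k}}\, n_k q_k(d_b) + \varepsilon,
\end{equation}
where I used that $\sum_{n} n\,B(n,n_k,q_k(d_b))$ is exactly the binomial mean $n_k q_k(d_b)$ (the $n=0$ term vanishing). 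Inserting this into the decomposition above reproduces the claimed formula.

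The hard part will be making the replacement $s_k\mapsto\overline{s_k}$ rigorous and pinning down the dependence of $\varepsilon$. Two effects must be controlled. First, $1/s_k$ is a nonlinear (convex) function of the fluctuating sum, so substituting $1/\overline{s_k}$ introduces a Jensen-type bias that a second-order concentration estimate must bound. Second, $s_k$ is statistically coupled to the count $n$, because the very degree-$d_b$ nodes being counted contribute $n\,d_b^{\alpha}$ to $s_k$; for large $n_k$ this coupling is weak, the degree-$d_b$ class perturbing $s_k$ only at relative order $q_k(d_b)\,d_b^{\alpha}/\overline{d^{\alpha}}$. The residual error is then governed by the concentration of $B(\cdot,n_k,q_k(d_b))$ about its mean, whose relative width $\sqrt{(1-q_k(d_b))/(n_k q_k(d_b))}$ shrinks both as $n_k$ grows and as $q_k(d_b)$ increases, which is precisely why $\varepsilon$ decreases with $q_k(d_b)$ and with $n_k$.
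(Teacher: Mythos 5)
Your proof follows essentially the same route as the paper: the same decomposition $q_{k+1}(d_a)=\sum_b p(d_a|d_b)\,q_k^{(s)}(d_b)$, the same substitution of Lemma~\ref{lemma:degreek}, the same replacement of the fluctuating $s_k$ by $\overline{s_k}$, and the same binomial-mean computation, with the paper justifying the replacement by splitting the sum over $n$ into a small-$n$ regime (small perturbation of $\overline{s_k}$) and a large-$n$ regime killed by the Hoeffding tail of the binomial --- a minor variant of your concentration argument. The one point where you diverge is the monotonicity of $\varepsilon$: you claim it shrinks as $q_k(d_b)$ \emph{increases} (via the relative width of the binomial), whereas the paper asserts it decreases as $q_k(d_b)$ gets \emph{smaller}, and the paper's direction is the defensible one, since the perturbation of $s_k$ induced by conditioning on the count $n$ scales with the absolute spread $\sqrt{n_k q_k(d_b)(1-q_k(d_b))}$, which grows with $q_k(d_b)$.
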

This theorem reveals how the degree distribution of the sampled nodes are influenced by the sampling rules and by the conditional probability. If the degree of the target node $d_a$ is independent of the source node i.e. $p(d_a|d_b)=p(d_a)$, the sampling rules will not affect $q_{k+1}(d_a)$ so that all the Hubballs will give the same result. However, in the range of degrees where the probability is not independent, the sampling will change the degree distribution with an increase for high degree nodes and a decrease of weakly connected nodes when $\alpha>-1$ and conversely for $\alpha<-1$. This effect can be seen in the experiment part. It is illustrated on a particular example on Fig.~\ref{fig:ddhubandcore} (a). From it, one can deduce that $p(d_a|d_b)=p(d_a)$ for large degree nodes $d_a>20$ as there is no change of the degree distribution for different values of $\alpha$, except maybe for $\alpha=-2$. Although differences are light in the region of weakly connected nodes, it suggests that the source and target node degrees are not completely independent. Nodes with a small degree tend to be more connected to nodes with close degree: the degree distribution shows an increase in this range as $\alpha$ decreases. In comparison, on Fig.~\ref{fig:ddhubandcore} (b), the Coreballs with different $\gamma$ have a much larger impact on the degree distribution. Even if the effect of the Hubball parameter is weak on this graph (Facebook graph), it may be much more pronounced from graphs with a high hierarchy for example, where hubs connect to hubs more than to nodes with a smaller number of connections. Measuring the difference of the Hubballs sampling can lead to an estimate of the dependence between neighbors degrees and lead to a better understanding of the network connections.

\begin{proof}
At layer $k$, we have
\begin{equation}
    q_{k+1}(d_a) = \sum_b p(d_a|d_b) q_k^{(s)}(d_b) ,
\end{equation}
where $q_k^{(s)}(d_b)$ is given by Lemma~\ref{lemma:degreek} and $p(d_a|d_b)$ is the conditional probability defined earlier. In order to simplify the expression of $q_k^{(s)}(d_b)$, we will replace $s_k = \sum_{i=1}^{n_k} d_i^\alpha$ by $\overline{s_k}$, which is independent from $n$ the number of nodes having a degree $d_b$ in $\L_k$. As a consequence, the probability to select a node with degree $d_a$ in a set where $n$ nodes with degree $d_a$ are present will be replaced by $d_a^{\alpha}/\overline{s_k}\times n$.
Assuming $n_k$ large and for small $n$, $d_a^{\alpha}/\overline{s_k}\times n$ is a good approximation of $p_k(e_{aj})$, since replacing a few $d_i$s by $d_b$s in $s_k$ is a small perturbation of $\overline{s_k}$, bounded by $\varepsilon/2$. For large $n$ this approximation does not hold anymore, however for $n\ge n_k q_k(d_a)$ the probability $B(n,n_k,q_k(d_a))$ decreases exponentially with $n$ (Hoeffding's inequality) and the error for large $n$ can be bounded by $\varepsilon/2$. The bound $\varepsilon$ hence decreases as $n_k$ and $ q_k(d_a)$ get smaller. 
From~\eqref{eq:probaselect}, we can write
$$q_k^{(s)}(d_b) =  \sum_{n=1}^{n_k} \frac{d_b^{\alpha+1}}{\overline{s_k}} n B(n,n_k,q_k(d_b)) + \varepsilon = \frac{d_b^{\alpha+1}}{\overline{s_k}} \sum_{n=1}^{n_k}  n B(n,n_k,q_k(d_b)) + \varepsilon.
$$
Since $B$ is a Binomial distribution, the sum over $n$ of the above expression yields $n q_k(d_b)$.
\end{proof}

\subsection{Coreball family}
The last theoretical result concerns the Coreball family and shows the influence of the parameter $\gamma$ on degree distribution of the sampled nodes.

Let us define a well-connected graph $G$. Let $S$ be a random set of nodes of $G$. In a well connected graph $G$, the probability of having one or more direct neighbors of $S$ connected to two or more nodes in $S$ is high.
\begin{theorem}
In a well-connected graph, the Coreball family changes the degree distribution of the collected nodes compared to the Snowball sampling with high probability. If the degree distribution changes, for $\gamma>0$ the density of nodes with degree 1 decreases and for $\gamma<0$ it increases.
\end{theorem}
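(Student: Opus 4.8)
The plan is to read off the behaviour of degree-$1$ nodes directly from the Coreball weight $d_j^{(in)\gamma}$ and to use the well-connectedness hypothesis only to guarantee that this weight is non-trivial. The decisive structural observation is that a node $j$ of degree $1$ in $G$ owns a single incident edge: whenever it is a neighbour of the current set $S=\L_k$ it is reached by exactly that one edge, so $d_j^{(in)}=1$ and its Coreball weight equals $1^{\gamma}=1$ for every $\gamma$. Degree-$1$ nodes are thus the unique class whose selection weight is pinned and insensitive to $\gamma$, and the whole argument reduces to tracking their weight against that of the remaining, $\gamma$-sensitive, neighbours.

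For the first assertion (the distribution changes with high probability) I would invoke the hypothesis as stated: in a well-connected graph there is, with high probability, at least one neighbour $j^{\star}$ of $S$ with $d_{j^{\star}}^{(in)}\ge 2$. For such a node the weight $(d_{j^{\star}}^{(in)})^{\gamma}$ differs from $1$ as soon as $\gamma\neq 0$, so the probability mass function $p_k$ over $\Eout_k$ is genuinely degree-biased, whereas the Snowball rule keeps every neighbour with equal certainty. Drawing a strict subset under a degree-biased law then over- or under-represents the classes of neighbours according to their value of $d_j^{(in)}$, so generically the expected degree distribution of the collected set departs from the Snowball one; the event ``the distribution changes'' contains the event ``$\exists\, j^{\star}$ with $d_{j^{\star}}^{(in)}\ge 2$'', which is high-probability by hypothesis. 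Conversely, if every neighbour has $d_j^{(in)}=1$ then all weights equal $1$, no reshaping occurs, and this is exactly the borderline situation that the conditional phrasing ``if the distribution changes'' is there to absorb.

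For the direction of the change, conditionally on it occurring, I would compare the collection at parameter $\gamma$ with the neutral collection at $\gamma=0$ (the Uni-edge ball), which by the Remark in Sec.~\ref{sec:samplingrules} is the unbiased reference playing the role of the Snowball for the degree distribution. Write the expected fraction of degree-$1$ nodes among the collected ones as (total leaf weight)$/$(total weight). A leaf contributes the same weight at parameter $\gamma$ as at $0$, since $1^{\gamma}=1$, whereas a neighbour with $d_j^{(in)}\ge 2$ contributes a weight that exceeds its $\gamma=0$ value precisely when $\gamma>0$ and falls below it when $\gamma<0$; this holds whether one weighs per node as $(d_j^{(in)})^{\gamma}$ or by the total incident edge mass $(d_j^{(in)})^{\gamma+1}$, since both $x^{\gamma}$ and $x^{\gamma+1}$ are strictly increasing in $\gamma$ for $x\ge 2$. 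Conditioned on the existence of at least one such node, the total non-leaf weight therefore moves strictly above its $\gamma=0$ value for $\gamma>0$ and strictly below it for $\gamma<0$, while the frozen leaf weight makes the leaf fraction decrease in the first case and increase in the second, yielding both signs at once.

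The step I expect to be the main obstacle is the passage from edge-level weights to the node-level degree distribution. Because the SampleEdges routine draws edges and a neighbour with $d_j^{(in)}$ incident edges has $d_j^{(in)}$ chances of being hit, the per-node inclusion probability carries a multiplicity factor beyond the bare weight $d_j^{(in)\gamma}$, and the draws are without replacement, so inclusion probabilities are not exactly proportional to weights; this is also what separates the truly exhaustive Snowball from the $\gamma=0$ reference used above. I would control it in the large-graph, small-subset regime already exploited in Lemma~\ref{lemma:degreek}, where the inclusion probability of an edge is well approximated by $m\,d_j^{(in)\gamma}/s_k$ and the normalisation $s_k$ concentrates. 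The monotonicity in $\gamma$ survives this approximation, which is all the ``increase/decrease'' conclusion requires; only the identification of the exact pivot, and the choice of whether the comparison baseline is the exhaustive Snowball or the neutral $\gamma=0$ collection, demands care with the multiplicity factor.
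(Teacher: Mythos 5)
Your proposal is correct and follows essentially the same route as the paper's own proof: the decisive facts are that a degree-$1$ node has $d_j^{(in)}=d_j=1$, so its Coreball weight is pinned at $1^{\gamma}=1$ for every $\gamma$, while well-connectedness guarantees with high probability some neighbour with $d_j^{(in)}\ge 2$, making $p_k$ non-uniform and $\gamma$-sensitive. You in fact go somewhat further than the paper, whose argument stops at ``$p(j|i)$ will not be uniform'': your explicit monotonicity-in-$\gamma$ ratio argument for the leaf fraction, and your flagging of the multiplicity factor and of the ambiguity between the exhaustive Snowball and the neutral $\gamma=0$ baseline, address real gaps that the published proof leaves implicit.
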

This result is illustrated on Fig.~\ref{fig:ddhubandcore} (b) for the Facebook graph. Even if the theorem is limited to nodes with degree 1, the results hold for higher degrees as shown on the figure. The shape of the degree distribution is changed as $\gamma$ evolves, with an inflexion point around nodes with degree 20. This makes the exponent $\gamma$ of the Coreball family an effective parameter for shaping the degree distribution of the sampled nodes.
The degree $d_j^{(out)}$ used in Coreball is a partial view of the real degree of node $j$. However, it is a good estimate of the real degree. 
\begin{proof}
Let $p(i)$ denote the probability to collect node $i$ and $p(j|i)$ the conditional probability of collecting node $j$ if node $i$ has been collected. Since the collection follow the edges (except for the initial node), one can write
$$p(j) = \sum_{i\in {\cal N}_j} p(j|i)p(i),
$$
where ${\cal N}_j$ is the set of neighbors of node $j$. Let us assume a subset ${\cal S}_j\subset{\cal N}_j$ of nodes have been collected at layer $k$. The probability to collect node $j$ at layer $k+1$ is:
$$p(j) = \sum_{i\in {\cal S}_j} p(j|i).
$$
In the present work, the conditional probability $p(j|i)=p_k(j|i)=p_k(e_{ij})$ is chosen to influence the collection of some category of nodes. For Coreball:
$$ 
p(j|i) =  w_{ij} d_j^{(out)\gamma}.
$$
The probability to collect node $j$ depends on $d_j^{(out)}$, i.e. the number of connection it has with nodes at layer $k$. For $\gamma>0$ the neighbors with highest connections to layer $k$ will be collected with a larger probability. Since the real degree of a neighbor is unknown, $d_j^{(out)}$ is different form $d_j$ except when $d_j=1$. Since the graph is highly connected, there is a high probability that $d_j^{(out)}>1$ for some nodes, so that $p(j|i)$ will not be uniform.
\end{proof}

\section{Experimental evaluation}

{\modif 
This section complements the theoretical one with multiple experiments and investigates different aspects of the Spikyball family. Firstly, it compares the Spikyball to the main sampling methods found in the literature. Secondly, the effect of the Spikyball parameters on the sampling is analysed on a practical case. We identify some tasks where it is particularly interesting to use a Spikyball: when the sampling needs to focus on the high degree nodes. Thirdly, its robustness and capacity of sampling high degree nodes (social network "influencers") is evaluated.

\subsection{Comparing to the literature}\label{sec:expspikyexp}

\begin{figure}[htb]
\centering
\begin{subfigure}{.5\textwidth}
  \centering
\includegraphics[width=\linewidth]{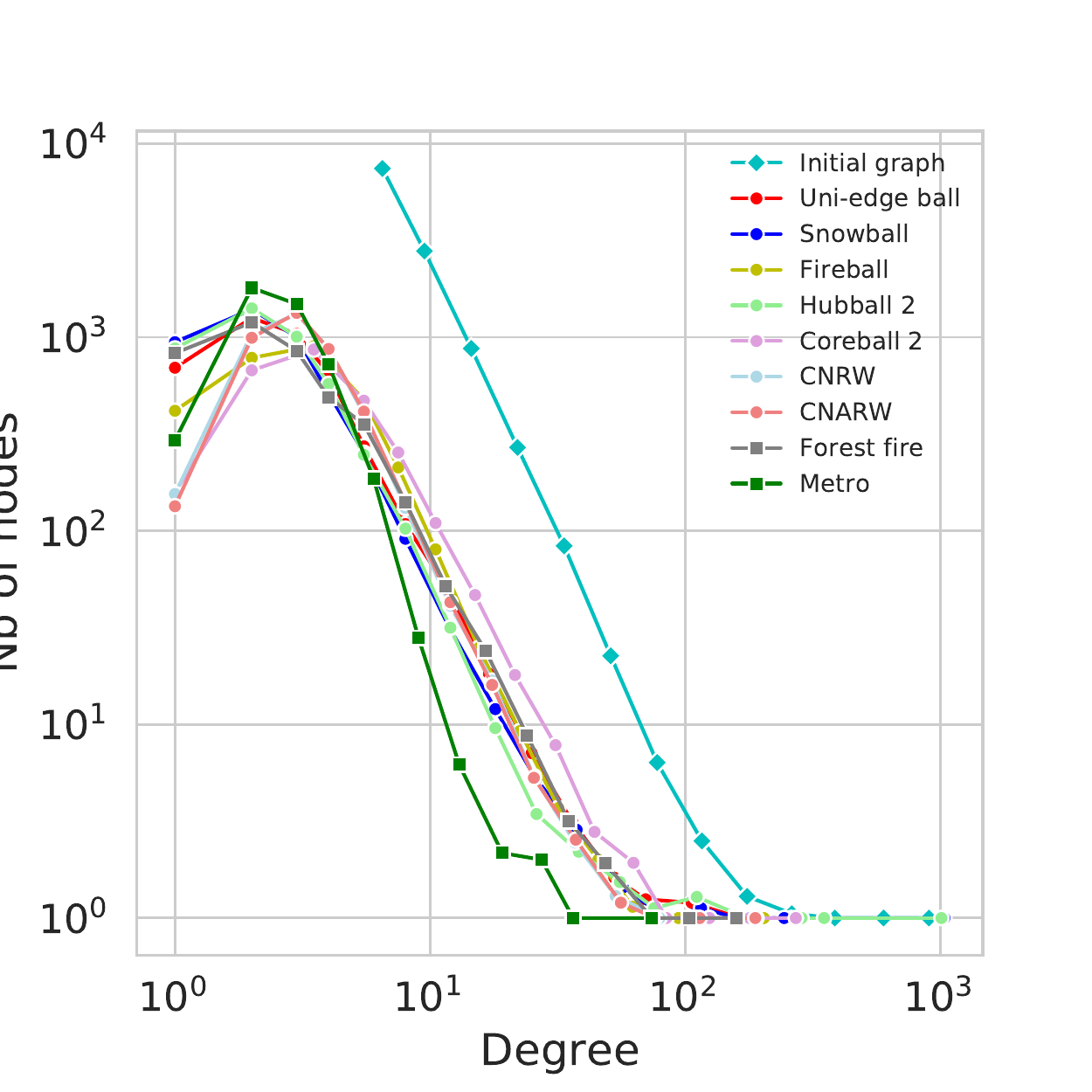}
\caption{Barab\'asi-Albert graph}
\end{subfigure}%
\begin{subfigure}{.5\textwidth}
\includegraphics[width=\linewidth]{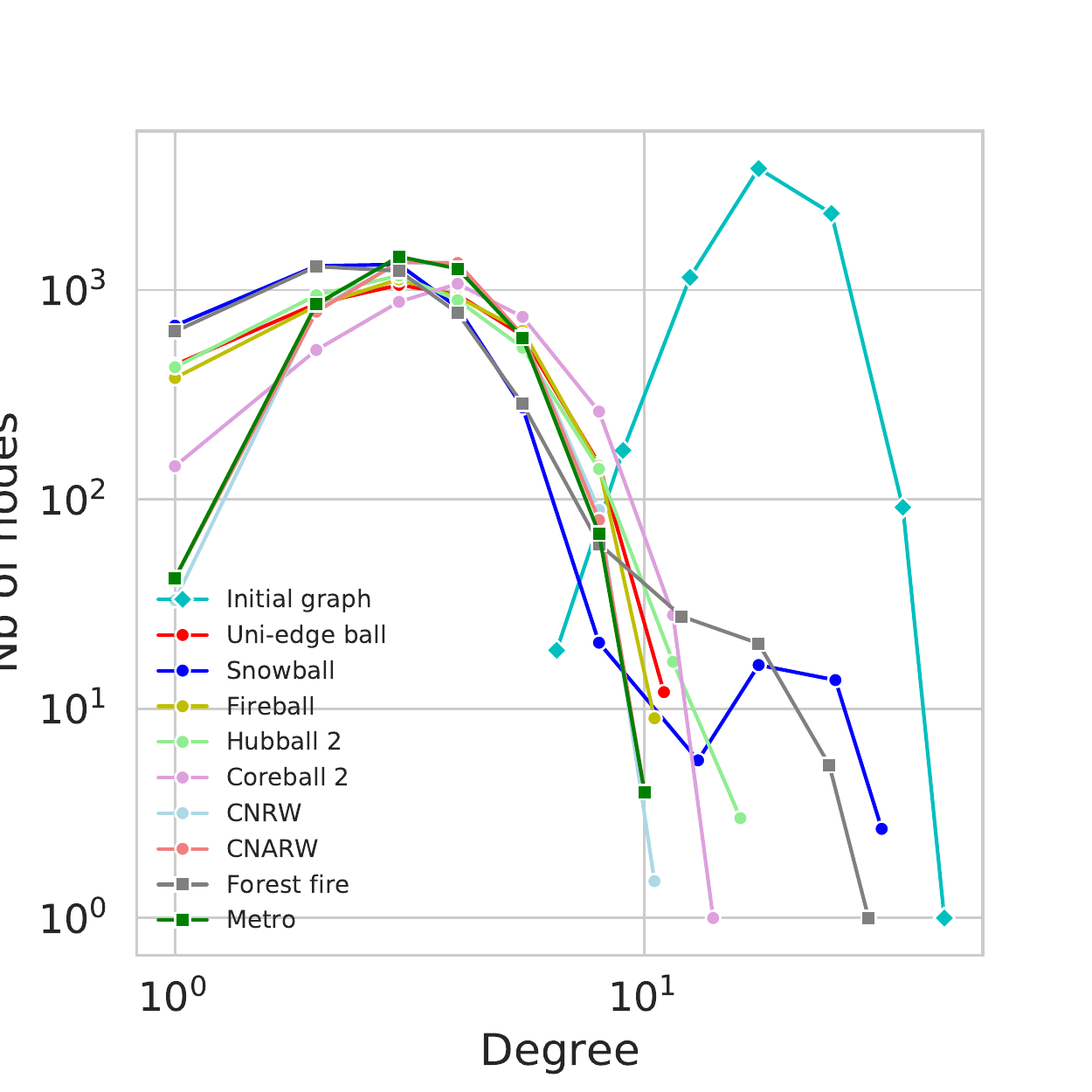}  
\caption{Erd\H{o}s-Renyi graph}
\end{subfigure}
\caption{Degree distribution for different sampled graphs on Barab\'asi-Albert (\textbf{a}) and Erd\H{o}s-Renyi (\textbf{b}) random graphs. The sampling reduces the graph to 10\% of its original size (50k nodes). The curve corresponding to the initial graph (not sampled) is in cyan. The different samplings provide close results. The slope is identical except for the Metropolis Hasting one, which is steeper on (a) and reaches a maximum shifted toward high degree nodes in (b).}
\label{fig:degreedistributionBA}
\end{figure}   

\begin{figure}[htb]
\centering
\begin{subfigure}{.5\textwidth}
  \centering
\includegraphics[width=\linewidth]{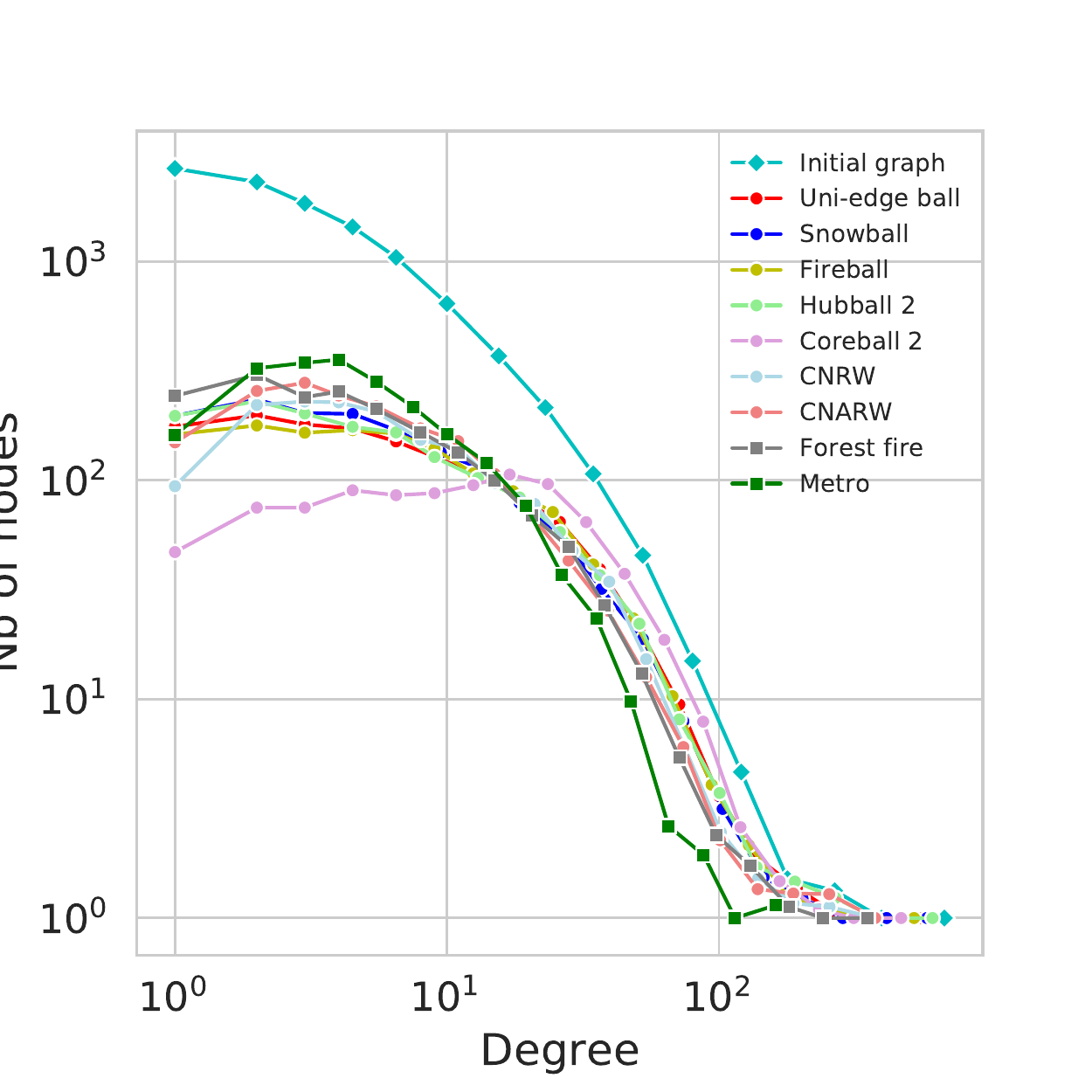}
\caption{Degree distribution}
\end{subfigure}%
\begin{subfigure}{.5\textwidth}
\includegraphics[width=\linewidth]{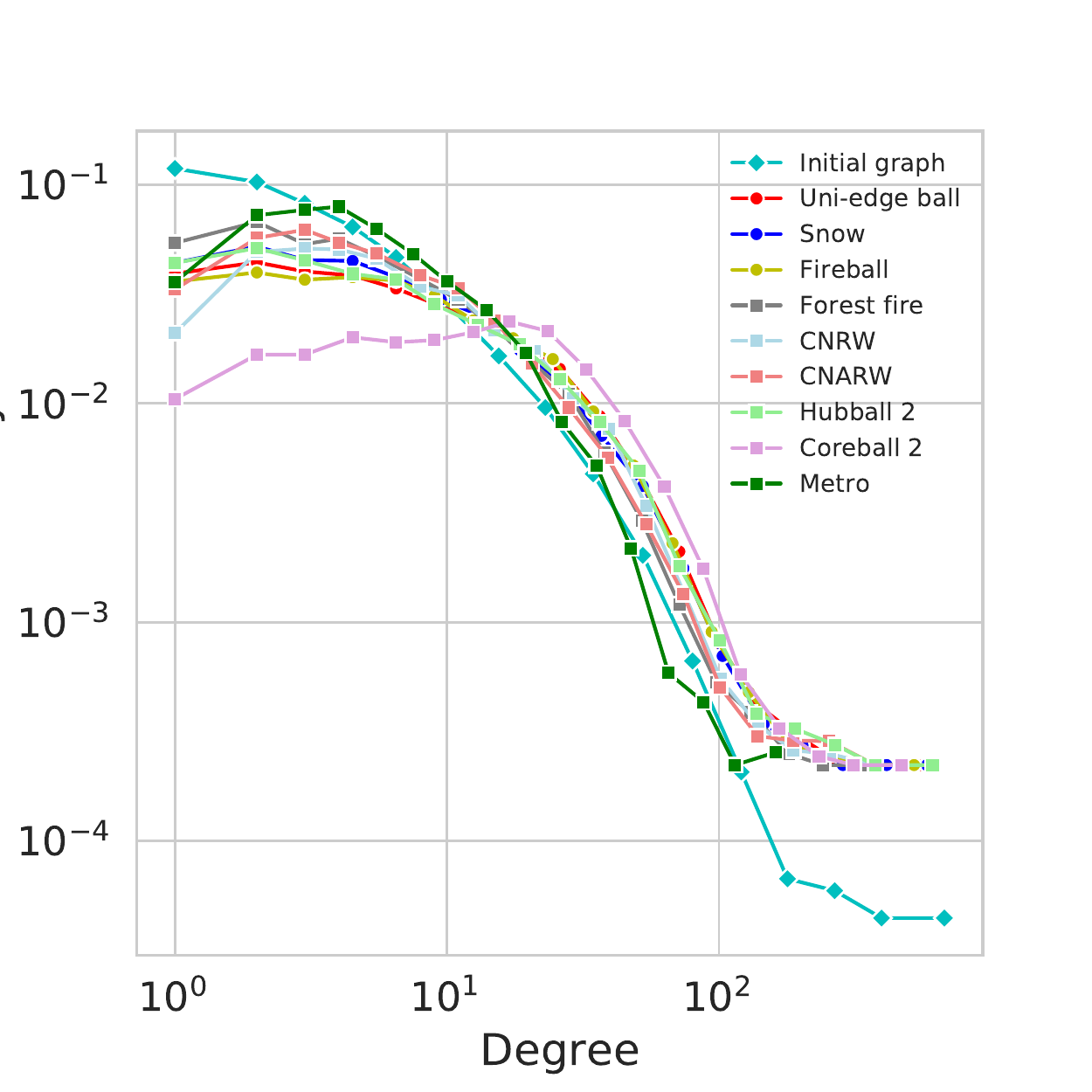}  
\caption{Degree density}
\end{subfigure}
\caption{Degree distribution (\textbf{a}) and density (\textbf{b}) of the collected nodes on the Facebook graph. The sampling reduces the graph to 20\% of its original size. The curve corresponding to the initial graph is in cyan. 
}
\label{fig:degreedistributionOriginal}
\end{figure}   

In order to compare the Spikyball approach to the methods presented in the literature, we investigate their quality as graph sampling methods. We first focus on the degree distribution of the sampled graphs. Using the open-source graph sampling toolbox~\cite{rozemberczki2020little}, we compare our main Spikyball variants, to the Snowball,  Forest Fire, CNRW and CNARW and Metropolis-Hasting sampling implemented therein. The results of sampling random networks are shown in Figure~\ref{fig:degreedistributionBA} and on a real network in Figure~\ref{fig:degreedistributionOriginal}.  The real network is a part of the Facebook graph, provided in the toolbox~\cite{rozemberczki2020little}. There is a general decrease in the number of nodes for each degree value, which is of course due to the fact that we subsample the network (to 10\% of its original size). As explained in Sec.~\ref{sec:methods}, the Fireball and Forest Fire give similar results as their random exploration is almost identical. Both methods are close to the Snowball sampling distribution. As expected, the Coreball 2, tends to collect more high degree nodes than the other methods. This is more pronounced in the real network case. On random networks, the pink curve is below the others for degrees smaller than 5 and slightly above beyond this value. The Metropolis Hasting method, which does not belong to the same sampling family, collects more low degree nodes. It is the closest to the shape of the initial graph distribution for the real network and the Erd\H{o}s-Renyi one. However, it is one of the worst for the Barab\'asi-Albert one with a steeper slope. All methods have difficulties to capture nodes with extreme degrees (too small or too large).

In order to further compare the different graph exploration methods, several metrics have been computed, using different real-world datasets taken from~\cite{rozemberczki2020little} and~\cite{netrepo}. The main characteristics of those datasets have been summarized in Table~\ref{tab:datasets}. Each graph has been sampled to get 10\% of its nodes (and 20\% for the networks having less than 50k nodes). For each sampled graph, its degree distribution is compared to the original one using the Kolmogorov-Smirnov (KS) test. In order to better exhibit the behavior on high-degree nodes, the KS test is also performed on a partial degree distribution, for degrees higher than the mean and 75-percentile degree in the original graph. Other parameters, namely, the average clustering coefficient relative error, transitivity ratio relative error, average PageRank fraction from the original graph, sampled edges ratio and density, are compared in Tables~\ref{tab:fb_num},~\ref{tab:gh_num},~\ref{tab:gplus_num} and~\ref{tab:youtube_num}. On these tables, Coreball refers to Coreball~2 and Hubball to Hubball~2.

We also introduce another metric called \emph{interCommunity VIP score} or \emph{IVIP} score that summarizes better the intent we have, i.e. sampling efficiently the influencers in social networks. First a community detection algorithm (we used Louvain in our experiments) is run on the initial graph. We then select the largest communities, in order to cover a sufficient fraction of the nodes (80\% in our experiments). Let us denote by $C = \{ C_0, C_1 , ...\}$ the set of the selected communities and by $d_{C_k}$ the sum of the degrees of all nodes belonging to the community $C_k$. In the sampled graph, some of the nodes belonging to these communities might be present. We denote by $C^s_k$ the sampled nodes belonging to $C_k$, and by $d_{C^s_k}$ the sum of the degrees (in the original graph) of nodes in $C^s_k$. For each sampled graph, the IVIP score is:
$$
\frac{\sum_k d_{C^s_k}}{\sum_k d_{C_k}}.
$$

This metric will be higher if the high degree nodes of the large communities are sampled, which is the desired behavior when trying to find influencers in a social network. In order to ensure the stability of IVIP, we computed it for 10 different sampling runs (using different initial seeds) and averaged the results. The results are shown on Table~\ref{tab:IVIP}.

\begin{table}[]
    \centering
    \begin{tabular}{l r r}
        & $|V|$ & $|E|$ \\ \hline
        Facebook~\cite{rozemberczki2020little} & 22 k  & 171 k\\
        Github~\cite{rozemberczki2020little} & 37.7 k & 289 k \\
        Google+~\cite{netrepo} & 202 k & 1.13 M \\
        Youtube~\cite{netrepo} & 495 k & 1.93 M \\
    \end{tabular}
    \caption{Datasets used in numerical experiments. Only the largest connected component has been kept in case of a disconnected graph.}
    \label{tab:datasets}
\end{table}

\begin{table}[]
    \centering
    \begin{tabular}{l|c c|c c| c c| c c}
         & \multicolumn{2}{c|}{Facebook (20\%) } &\multicolumn{2}{c|}{Github (20\%)} &\multicolumn{2}{c|}{Google+ (10\%)} &\multicolumn{2}{c}{Youtube (10\%)}  \\ \hline
         & full & $d > \text{mean}$ & full & $d > \text{mean}$ & full & $d > \text{mean}$ & full & $d > \text{mean}$ \\ \hline
         Metropolis-Hastings & 0.118 & 0.127 & 0.174 & 0.059 & 0.216 & 0.105 &  0.237 &	0.020 \\
         CNRW & 0.227 & 0.053 & 0.278 & 0.076 & 0.591 & 0.153 & 0.362 & 0.178 \\
         CNARW & 0.181 & 0.038 & 0.273 & 0.065 & 0.489 & 0.161 & 0.353 & 0.141 \\
         Forest Fire & 0.160 & 0.016 & 0.281 & 0.059 & 0.529 & 0.102 & 0.318 & 0.141 \\ \hline
         \textbf{Fireball} & 0.266 & 0.077 & 0.306 & 0.077 & 0.557 & 0.113 & 0.396 & 0.169 \\
         \textbf{Edgeball} & 0.268 & 0.084 & 0.211 & 0.035 & 0.638 & 0.164 & 0.342 & 0.169\\
         \textbf{Hubball} & 0.243  & 0.096 & 0.112 & 0.027 & 0.573 & 0.140 & 0.279 & 0.099 \\
         \textbf{Coreball} & 0.537 & 0.183 & 0.520 & 0.056 & 0.702 & 0.188 & 0.499 & 0.223 \\ \hline
         Snowball &	0.224 &	0.082 & 0.073 & 0.022 & 0.482 & 0.052 & 0.279 & 0.125\\ \hline
    \end{tabular}
    \caption{Kolmogorov-Smirnov test for degree distribution similarity (lower is closer) for the Facebook and Github graphs sampled at 20\%, Google+ and Youtube graphs sampled at 10\%. The degree distribution is considered either fully ("full" column) or for high-degree (greater than the mean degree of all nodes in the graph) nodes only.}
    \label{tab:ks_combined}
\end{table}

\begin{table}[]
    \centering
    \begin{tabular}{l l|c | c  | c  | c }
        Sampling method & & Avg. clustering coef. &  Transitivity ratio & PageRank ratio & density\\ \hline
         
         \multirow{2}{*}{Metropolis-Hastings}& 10\% & 0.420 (16.83\%) & 0.405 (76.25\%) &  	1.490	& 	\num{4.91E-03} 	  \\
         & 20\% & 0.409 (12.57\%) &   0.330 (42.15\%) & 1.463 &\num{3.08E-03} \\ \hline
         \multirow{2}{*}{CNRW} & 10\% &0.464 (29.06\%)	&  0.337 (45.15\%) &	2.007 &		\num{9.52E-03} \\
         & 20\% & 0.445	(23.78\%)	&	0.299	(28.66\%) &1.771&\num{5.22E-03} \\ \hline
         \multirow{2}{*}{CNARW} & 10\% & 0.369 (2.49\%)& 0.287 (23.42\%)& 2.057 &\num{7.99E-03}	\\
         & 20\% & 0.375	(4.19\%)  &0.274 (17.77\%)&	1.757& \num{4.75E-03}\\ \hline
         \multirow{2}{*}{Forest fire} & 10\% & 0.320 (11.10\%)&	 0.291	(25.23\%) &	1.773 &	\num{8.49E-03}  \\
         & 20\% &0.411 (14.37\%) &0.275 (18.42\%) &1.600 & \num{4.46E-03}\\ \hline
         \multirow{2}{*}{\textbf{Fireball}}&10\%&0.413 (14.82\%) &	0.303 (30.32\%)  &	1.887  &	\num{1.09E-02} \\
         & 20\% & 0.392	(8.95\%) & 0.254 (9.12\%)& 1.650&\num{5.96E-03}\\ \hline
         \multirow{2}{*}{\textbf{Edgeball}}&10\%& 0.441	(22.59\%)&	0.238	(2.44\%)&1.854  &\num{1.50E-02} \\
         &20\% & 0.449 (24.73\%)&	0.257 (10.53\%)&	1.664&\num{6.23E-03}\\ \hline
         \multirow{2}{*}{\textbf{Hubball}}&10\%&0.484 (34.46\%)&	0.220 (5.51\%)&	 1.805 &	\num{1.41E-02} \\
         & 20\% &  0.460	(27.94\%)&0.223 (4.02\%) &1.619 &\num{6.46E-03}\\ \hline
         \multirow{2}{*}{\textbf{Coreball}}&10\%&0.476 (32.32\%) &  0.291	(25.39\%)&		2.030 &	 \num{1.97E-02}\\
         & 20\%&	0.461 (28.23\%) &0.300 (29.20\%)&1.823 &\num{7.57E-03}\\ \hline
         \multirow{2}{*}{Snowball}&10\%& 0.373	(3.81\%) & 0.245	(5.66\%) & 	1.750 &  \num{1.22E-02} \\
         &20\%&	0.399 (10.87\%) &0.243 (4.48\%) &1.660 & \num{5.69E-03} \\ \hline
    \end{tabular}
    \caption{Numerical results for the Facebook graph sampled at 10\% and 20\%. Full graph's average clustering coefficient is 0.36, transitivity ratio is 0.23, density is \num{6.7e-4}. Relative error for the average clustering coefficient and transitivity ratio is shown between parentheses.}
    \label{tab:fb_num}
\end{table}
\begin{table}[]
    \centering
    \begin{tabular}{l l|c | c  | c  | c }
        Sampling method & & Avg. clustering coef. &  Transitivity ratio & PageRank ratio & density\\ \hline
         
         \multirow{2}{*}{Metropolis-Hastings}& 10\% & 0.110	(34.64\%)&	0.056 (355.32\%) & 	2.070 &	\num{2.56E-03} 	  \\
         & 20\% & 0.166	(1.22\%)&	0.032	(160.00\%)&	2.096	&\num{2.18E-03} \\ \hline
         \multirow{2}{*}{CNRW} & 10\% &0.237 (41.31\%)&	0.061 (397.52\%)&	3.763&	\num{6.74E-03} \\
         & 20\% & 0.214	(27.73\%)&	0.041 (227.87\%) &	2.631 &	\num{3.55E-03} \\ \hline
         \multirow{2}{*}{CNARW} & 10\% & 0.211 (25.79\%) &0.061	(397.22\%) &	3.778 &	\num{6.75E-03}	\\
         & 20\% & 0.186	(10.72\%)&	0.041	(232.39\%)&	2.645&	\num{3.54E-03}\\ \hline
         \multirow{2}{*}{Forest fire} & 10\% & 0.238 (41.91\%)&	0.063 (408.04\%) &	3.656 &	\num{6.30E-03}  \\
         & 20\% &0.190	(13.41\%)&	0.038	(207.28\%)&	2.607&	\num{3.59E-03}\\ \hline
         \multirow{2}{*}{\textbf{Fireball}}&10\%&0.246	(46.53\%)&	0.048	(288.58\%)&	3.243&	\num{5.45E-03} \\
         & 20\% & 0.207	(23.26\%)&	0.040	(225.95\%)&	2.583 &	\num{3.75E-03}\\ \hline
         \multirow{2}{*}{\textbf{Edgeball}}&10\%& 0.310	(85.18\%)&	0.040	(223.09\%)&	3.036&	\num{4.65E-03}\\
         &20\% & 0.247	(47.28\%)&	0.035	(184.11\%)&	2.452&	\num{3.09E-03}\\ \hline
         \multirow{2}{*}{\textbf{Hubball}}&10\%&0.373	(122.79\%)&	0.025	(98.76\%)&	2.594&	\num{3.94E-03} \\
         & 20\% &  0.348	(107.42\%)&	0.020	(61.37\%)&	2.092&	\num{2.48E-03}\\ \hline
         \multirow{2}{*}{\textbf{Coreball}}&10\%&0.252	(50.12\%)&	0.054	(337.38\%)&	4.247&	\num{1.05E-02}\\
         & 20\%&0.188	(12.06\%)&	0.039	(214.62\%)&	2.968&	\num{4.84E-03}\\ \hline
         \multirow{2}{*}{Snowball}&10\%& 0.435	(159.64\%)&	0.023	(86.65\%)&	2.987&	\num{4.28E-03} \\
         &20\%&	0.393	(134.28\%)&	0.019	(49.94\%)&	2.027&	\num{2.48E-03} \\ \hline
    \end{tabular}
    \caption{Numerical results for the Github graph sampled at 10\% and 20\%. Full graph's average clustering coefficient is 0.168, transitivity ratio is \num{1.24E-02}, density is \num{4.07E-04}. Relative error for the average clustering coefficient and transitivity ratio is shown between parentheses.}
    \label{tab:gh_num}
\end{table}
\begin{table}[]
    \centering
    \begin{tabular}{l |c | c  | c  | c }
        Sampling method & Avg. clustering coef. &  Transitivity ratio & PageRank ratio & density\\ \hline
         
         Metropolis-Hastings& 0.255	(72.46\%)&	0.312 (30.50\%) &	1.457 &	\num{6.74E-04} \\
         CNRW &0.397 (168.32\%)& 0.308	(28.85\%) &	1.910 &	\num{2.71E-03} \\
         CNARW &  0.284	(92.07\%)&	0.303	(26.89\%) &	1.909 &	\num{2.37E-03}	\\
         Forest fire  & 0.317 (114.05\%)&	0.298 (24.65\%)&	1.859	&\num{2.10E-03}\\ \hline
         \textbf{Fireball}&0.346 (133.83\%) &	0.317	(32.72\%)&	1.839 &\num{2.33E-03} \\
         \textbf{Edgeball}& 0.363 (145.33\%)&	0.264 (10.37\%)& 1.744 & \num{3.06E-03}\\
         \textbf{Hubball}&0.397	(168.16\%)&	0.264 (10.64\%)& 1.545 & \num{2.79E-03} \\
         \textbf{Coreball}&0.400 (170.69\%) & 0.296	(23.83\%) & 1.940 & \num{3.23E-03}\\ \hline
         Snowball & 0.360 (143.13\%)& 0.285	(19.50\%)&	1.576 &	\num{1.97E-03} \\ \hline
    \end{tabular}
    \caption{Numerical results for the Google+ graph sampled at 10\%. Full graph's average clustering coefficient is 0.148, transitivity ratio is 0.238, density is \num{5.57E-05}. Relative error for the average clustering coefficient and transitivity ratio is shown between parentheses.}
    \label{tab:gplus_num}
\end{table}
\begin{table}[]
    \centering
    \begin{tabular}{l |c | c  | c  | c }
        Sampling method & Avg. clustering coef. &  Transitivity ratio & PageRank ratio & density\\ \hline
         
         Metropolis-Hastings& 0.115	(4.24\%)&	0.052	(487.46\%)&	2.930&	\num{2.07E-04}\\
         CNRW &0.153 (39.22\%)&	0.040 (354.69\%)& 4.201 &	\num{4.65E-04}\\
         CNARW &  0.125	(13.55\%)&	0.041 (364.51\%)&	4.210 &	\num{4.58E-04}	\\
         Forest fire  & 0.146	(32.40\%)&	0.035	(299.01\%) &	4.050 &	\num{4.05E-04}\\ \hline
         \textbf{Fireball}&0.154 (40.13\%)&	0.038	(332.71\%)&	3.967 &	\num{4.90E-04} \\
         \textbf{Edgeball}& 0.186	(69.20\%)&	0.028	(221.57\%)&	3.886 &	\num{4.57E-04}\\
         \textbf{Hubball}&0.306	(177.98\%)&	0.009	(6.58\%)&	3.219 &	\num{3.74E-04} \\
         \textbf{Coreball}&0.143	(30.13\%)&	0.038	(331.17\%)&	4.526 &	\num{5.91E-04}\\ \hline
         Snowball & 0.294	(167.08\%)&	0.009	(3.64\%)&	3.620 &	\num{3.81E-04}\\ \hline
    \end{tabular}
    \caption{Numerical results for the Youtube graph sampled at 10\%. Full graph's average clustering coefficient is 0.11, transitivity ratio is \num{8.8e-3}, density is \num{1.57E-05}. Relative error for the average clustering coefficient and transitivity ratio is shown between parentheses.}
    \label{tab:youtube_num}
\end{table}
\begin{table}[]
    \centering
    \begin{tabular}{l | c c | c c | c | c }
    & \multicolumn{2}{c|}{Facebook} & \multicolumn{2}{c|}{Github} & Google + & Youtube \\
         & 10\% & 20\% & 10\% & 20\% & 10\% & 10\% \\ \hline
    Metropolis-Hastings & 0.200	(0.022)& 0.378	(0.028) &	0.299	(0.020)& 0.538	(0.019)& 0.291	(0.019) &	0.367	(0.009) \\ 
    CNRW &	0.322 (0.009) &	 0.531	(0.013) & 0.513	(0.004) & 0.704	(0.002)	 & 0.644 (0.004)& 0.573	(0.001) \\
    CNARW	& 0.325 (0.011)&	0.523	(0.010) & 0.517	(0.004)& 0.709	(0.003) & 0.613	(0.004)	&0.574	(0.001) \\
    Forest Fire & 0.272	(0.026) & 0.451 (0.055)& 0.502	(0.009) & 0.698	(0.021) & 0.549	(0.034) & 0.540	(0.033) \\ \hline
    \textbf{Fireball}&	0.278 (0.030) &	0.482 (0.025) &	0.476 (0.028) &	0.685 (0.019) &	0.580 (0.024)&	0.547 (0.022) \\
    \textbf{Edgeball} &0.353 (0.010) &	0.552 (0.009) &	0.430 (0.027) &	0.653 (0.011) &	0.658 (0.004) &	0.549 (0.006) \\
    \textbf{Hubball} & 0.354 (0.010) & 0.557 (0.005) &	0.333 (0.049) &	0.555 (0.031) &	0.630 (0.008) &	0.450 (0.009) \\
    \textbf{Coreball} & 0.413 (0.013) &	0.613 (0.013) &	0.584 (0.017) &	0.801 (0.005) &	0.698 (0.003) &	0.643 (0.002) \\ \hline
    Snowball & 0.274 (0.053) &	0.471 (0.035) &	0.387 (0.066) &	0.632 (0.069) &	0.546 (0.052) &	0.449	(0.037) \\ \hline
    \end{tabular}
    \caption{IVIP score (higher is better) for the different datasets, averaged over 10 sampling runs for each dataset (standard deviation is shown between parentheses).}
    \label{tab:IVIP}
\end{table}

\noindent{\bf Transitivity:}
When a sampling scheme focuses on collecting more high degree nodes, there is a risk to get stuck in a single community, where these nodes are mostly connected to each other and not to outside communities. The experiments show that Spikyballs, as well as other sampling methods, do not have this behavior and explore more than one community (see IVIP score). This is confirmed by the values of the transitivity ratio in Table~\ref{tab:gh_num}. Indeed, staying within a community would lead to more connections between nodes so more triangles and a higher transitivity ratio. The experiments show that the values for all sampling methods are close, and even the ones of the Spikyball are slightly smaller.

\noindent{\bf Pagerank ratio:}
Concerning the pagerank measure, all samplings are increasing the values (ratio>1) of the initial graph. This is expected as it is difficult to collect weakly connected nodes (hence with low pagerank), at the periphery of the graph. We notice a higher value for the Coreball on all graphs. This confirms the fact that it collects more high degree nodes, more central, with high page rank. The other members of the Spikyball have diverse values, showing that the average pagerank of a sampled graph can be controlled by the Spikyball parameters. Coreball focuses on the central part of the network, while Hubball samples a more balanced proportion of central and peripheral nodes.

\noindent{\bf Density:}
Coreball creates the network with the highest density, confirming again, its tendency to collect high degree nodes. The rest of the Spikyball family have values similar to Forest Fire, CNARW, and CNRW. The standard Snowball provides a network with a lower density, which may depend on the graph and on the starting point of the collection. Metropolis-Hastings gives always the smallest density by far, confirming the experiments on the degree distribution where it collects less highly connected nodes than the other methods.

\noindent{\bf IVIP and degree distribution:}
As displayed in Table~\ref{tab:IVIP}, the snowball-sampled graphs achieve poor IVIP scores as they contain less communities than the original graphs. Snowball sampling collects every node around its starting location, preventing it to explore a larger region of the graph. Metropolis-Hastings, with it tendency to sample more weakly connected nodes, misses a part of the high degree nodes. The Coreball gives the highest scores, showing that it focuses on collecting high degree nodes while at the same time exploring most of the communities. The other members of the Spikyball family obtain intermediate scores. Although degree distribution is better approximated by non-Spikyball sampling schemes, as shown in Table~\ref{tab:ks_combined}, this is done at the expense of having a lower fraction of the influencer nodes in the sampled graph.

\subsection{Influence of the Spikyball parameters}
We now want to see how the parameters of the Spikyballs influence the sampling of the original network. We analyze the degree distribution of the nodes for different parameters. We adopt two approaches: 1) we plot the degree distribution of the sampled graph, as in the previous subsection, and 2) we plot the degree distribution of the nodes captured by the sampling, using their degree \emph{in the original graph}. These are two different pictures, the sampled nodes have a different degree in the sampled graph and in the original graph (except for the snowball sampling). This is illustrated in Fig.~\ref{fig:spikyballsnowball} where the purple sampled nodes have different number of edges in the two graphs, purple ones in the sampled graphs, purple and black ones in the original graph. The results of the samplings on the Facebook graph can be seen in Fig.~\ref{fig:degreedistributionOriginal}.

We remind that the Hubball exploration is directed by the degree of the nodes at each layer, collecting more neighbors from nodes having a larger (resp. smaller) degree when $\alpha$ is positive (resp. negative).
As shown on (c), $\alpha$s with negative values increase the collection of small degree nodes, showing that, on this graph, small degree nodes tend to have more connections to other small degree nodes than to hubs. Positive $\alpha$s have no impact on the distribution compared to $\alpha=0$ (the Uni-edge ball): following the edges going out of the hubs at layer $k$ does not lead to a higher number of high degree nodes collected. Additional experiment with $\alpha>2$ (not shown) confirms this behavior.
This sampling difference is much smaller in the sampled graph. There is barely any visible impact on the degree distribution of the sampled graph obtained from the Hubball family (a), including Forest Fire and the Snowball.

The effect of the parameters is much more visible in the case of the coreballs (b) and (d). As can be seen, positive $\gamma$ are favoring the collection of high degree nodes. There is more than an order of magnitude difference between $\gamma=-2$ and $\gamma=2$ in the sampling of weakly connected nodes. The difference is larger when looking at the distribution of degrees in the original graph. As in the case of the Hubballs, sampling differences are attenuated on the final sampled graph.
}

\begin{figure}[htb]
\centering
\begin{subfigure}{.5\textwidth}
  \centering
\includegraphics[width=\linewidth]{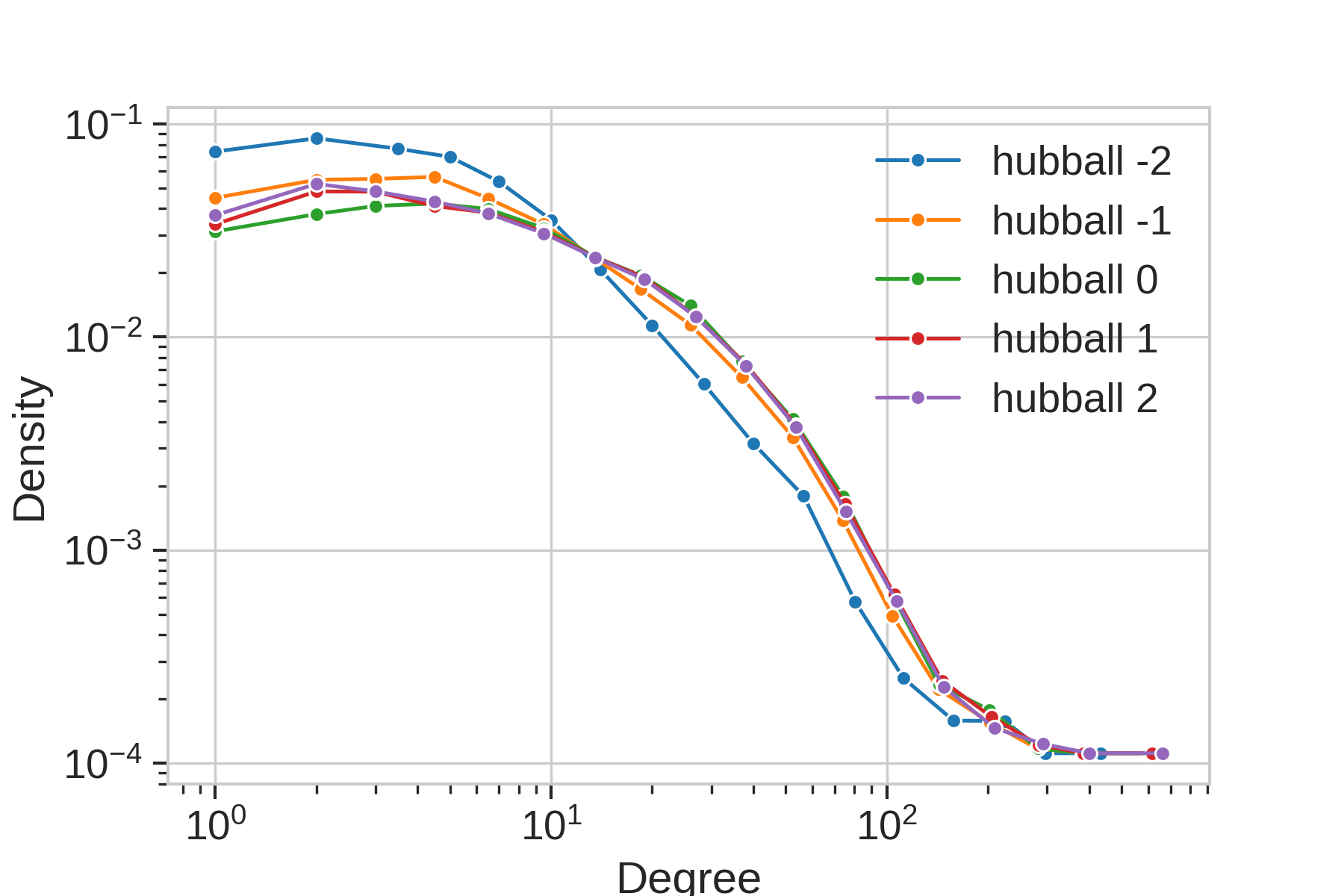}
\caption{Hubballs, sampled graph}
\end{subfigure}%
\begin{subfigure}{.5\textwidth}
\includegraphics[width=\linewidth]{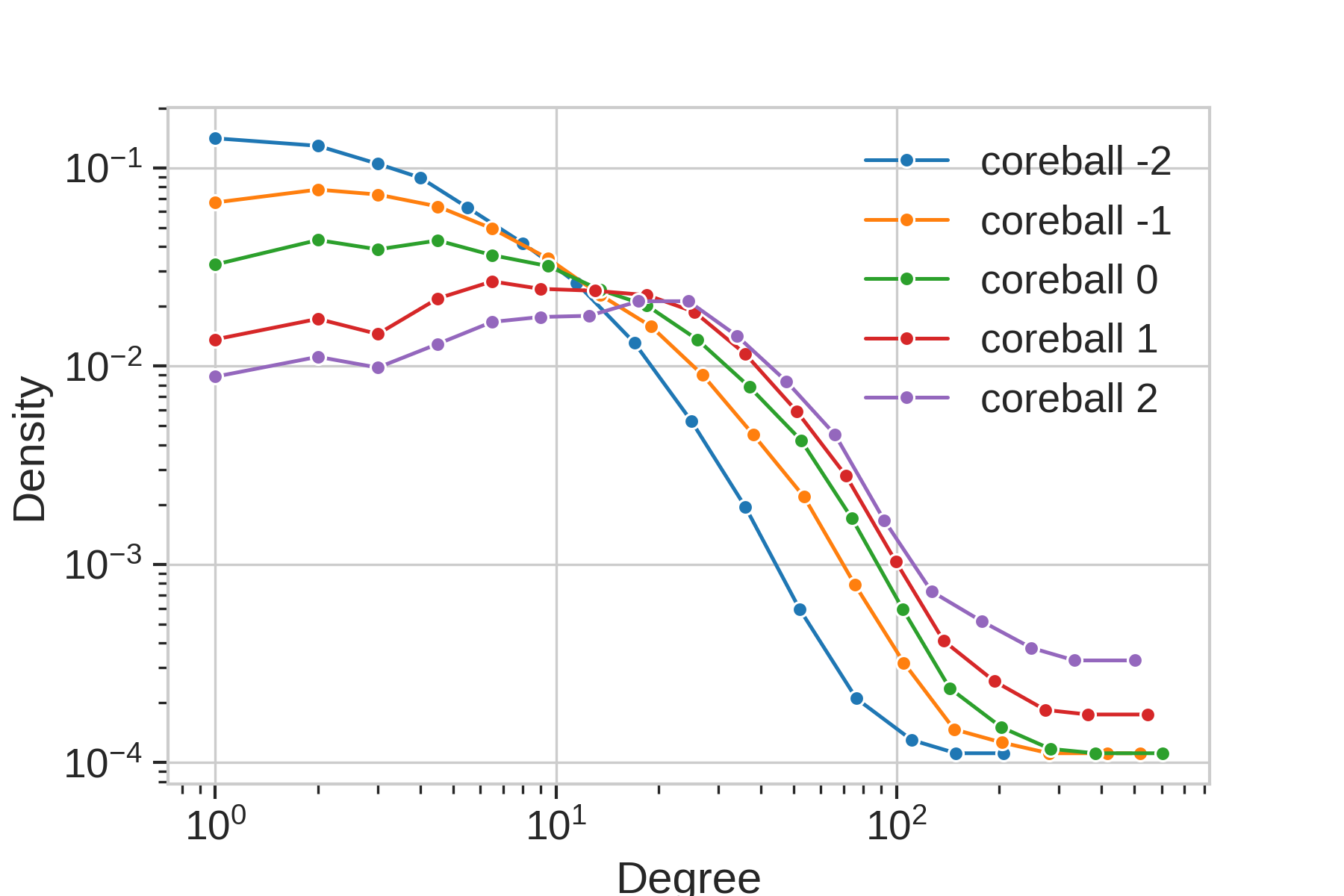}  
\caption{Coreballs, sampled graph}
\end{subfigure}
\begin{subfigure}{.5\textwidth}
  \centering
\includegraphics[width=\linewidth]{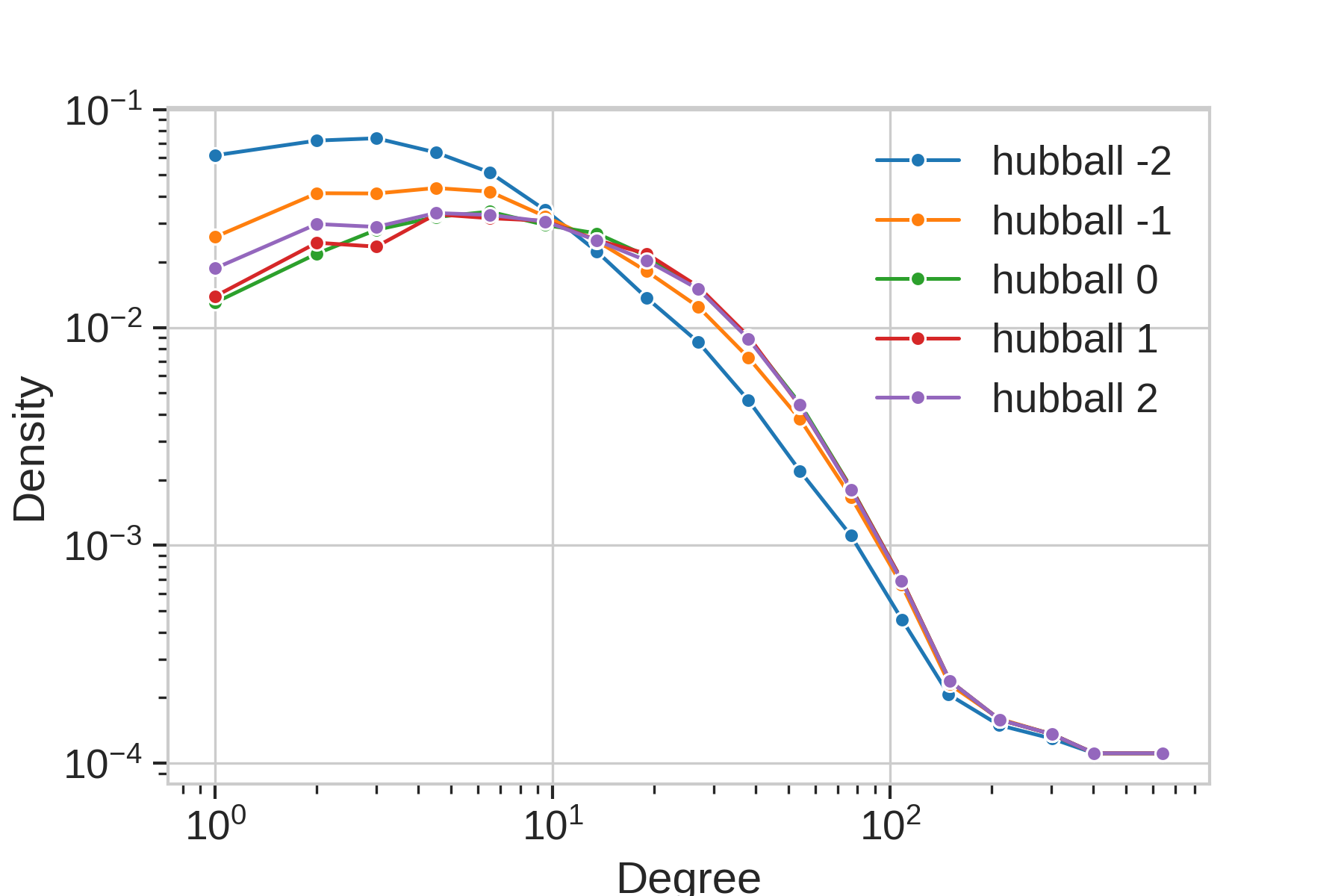}
\caption{Hubballs, original graph}
\end{subfigure}%
\begin{subfigure}{.5\textwidth}
\includegraphics[width=\linewidth]{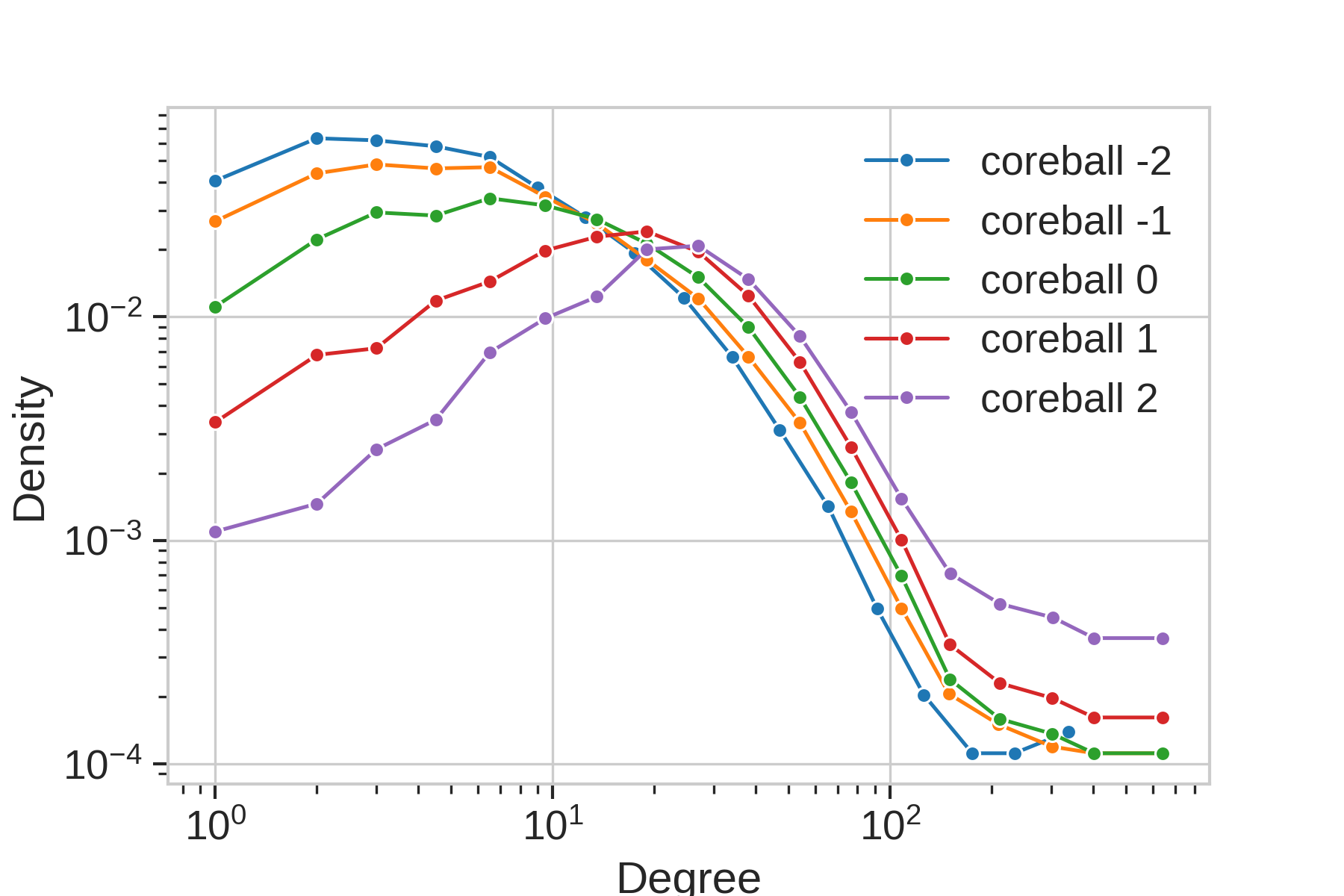}
\caption{Coreballs, original graph}
\end{subfigure}
\caption{Degree distribution of the sampled nodes for Hubballs (left) and Coreballs (right) with different parameters, on the Facebook graph. For the sampled nodes, their degree in the sampled graph (top) and in the initial graph (bottom) lead to two different degree distributions. They are different since, in the sampled graph, edges to unsampled nodes are not present. The Hubball with $\alpha=0$ is equal to the Coreball with $\gamma=0$, hence plots can be compared.
}
\label{fig:ddhubandcore}
\end{figure}   

\subsection{Probability to visit influencers}
Since the exploration involves a random part, it is important to know the probability to collect (or miss) the important nodes in the network or in some region of the network. To estimate this probability, we performed 10 successive independent explorations for the Hubball, Coreball, Snowball, CNRW, CNARW, Metropolis-Hastings and Forest fire. For each exploration, the initial starting point was a set of 2 nodes selected randomly inside the network. We then counted the number of times a node was collected over the 10 runs. The results are shown in Figure~\ref{fig:probavisits}. 
Ideally, important nodes would be collected at each run (10 times). The curves show that the number of visits depends on the degree of the node. Naturally, highly connected nodes have higher chances to be visited and the number of visits rises with the degree for all samplings. The number of visits is influenced by the number of layers (comparing (a) and (b)), with a better capture of high degree nodes with more layers.
This might at first be counterintuitive as with each layer new regions of the graph are explored and the number of possible nodes to visit increases. However, social networks have a small-world property and the exploration can not go very far from the initial node. Typically, the diameter of such networks is around 6, therefore adding more layers allows coming back to the initial node and collecting its neighbors. The diameter of the Facebook graph used for the experiments is 15.

Some samplings perform much better than others.  The best sampling approaches for sampling hubs are Uni-edge ball (Hubball 0) and Hubball 2. They are able to collect nodes with degree above 100 with 100\% probability. With enough layers, it shows that social network influencers can be captured efficiently by these methods even when the network is sampled randomly, selecting only 10\% of the neighbors at each layer.

\begin{figure}[htb]
\centering
\begin{subfigure}{.5\textwidth}
  \centering
\includegraphics[width=\linewidth]{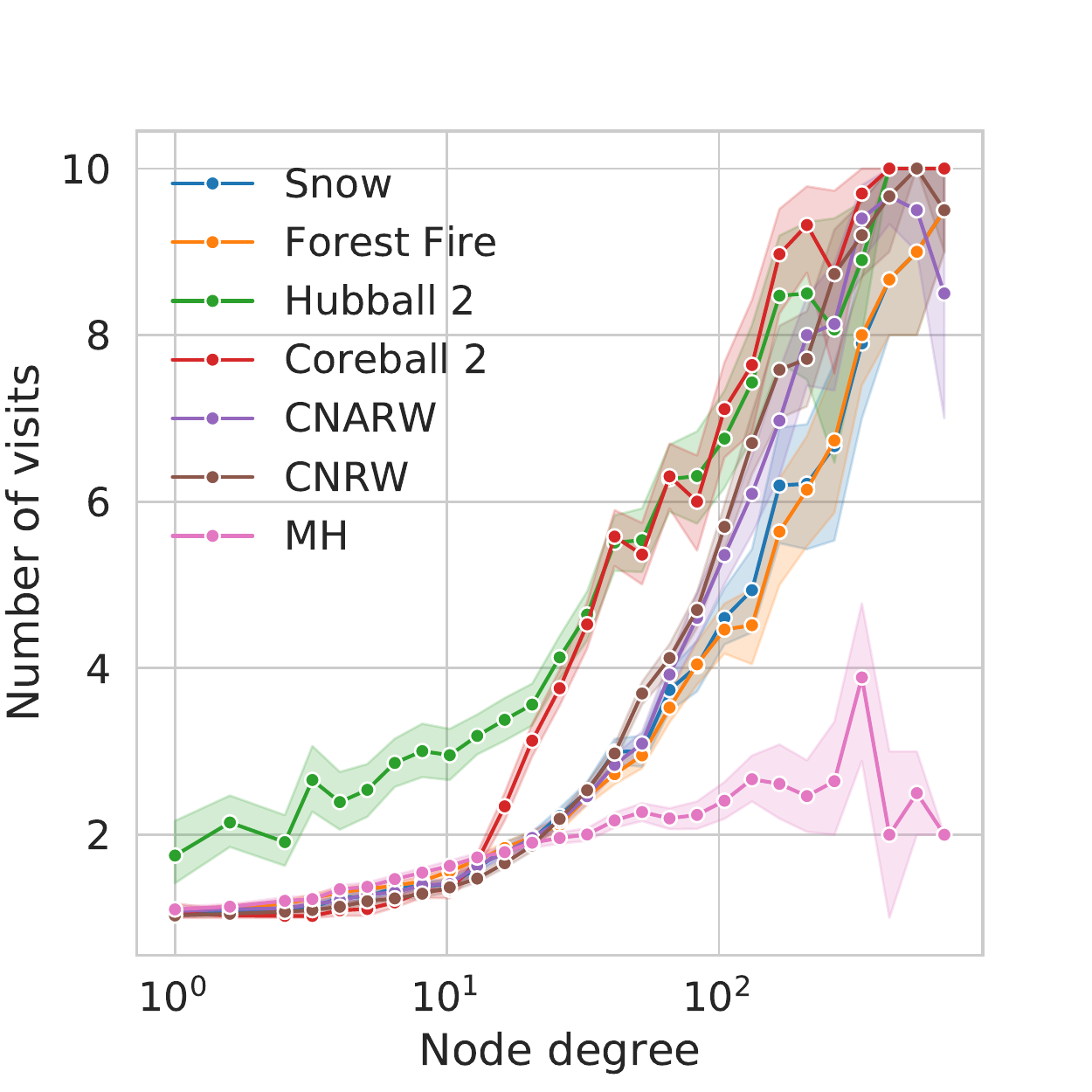}
\caption{}
\end{subfigure}%
\begin{subfigure}{.5\textwidth}
\includegraphics[width=\linewidth]{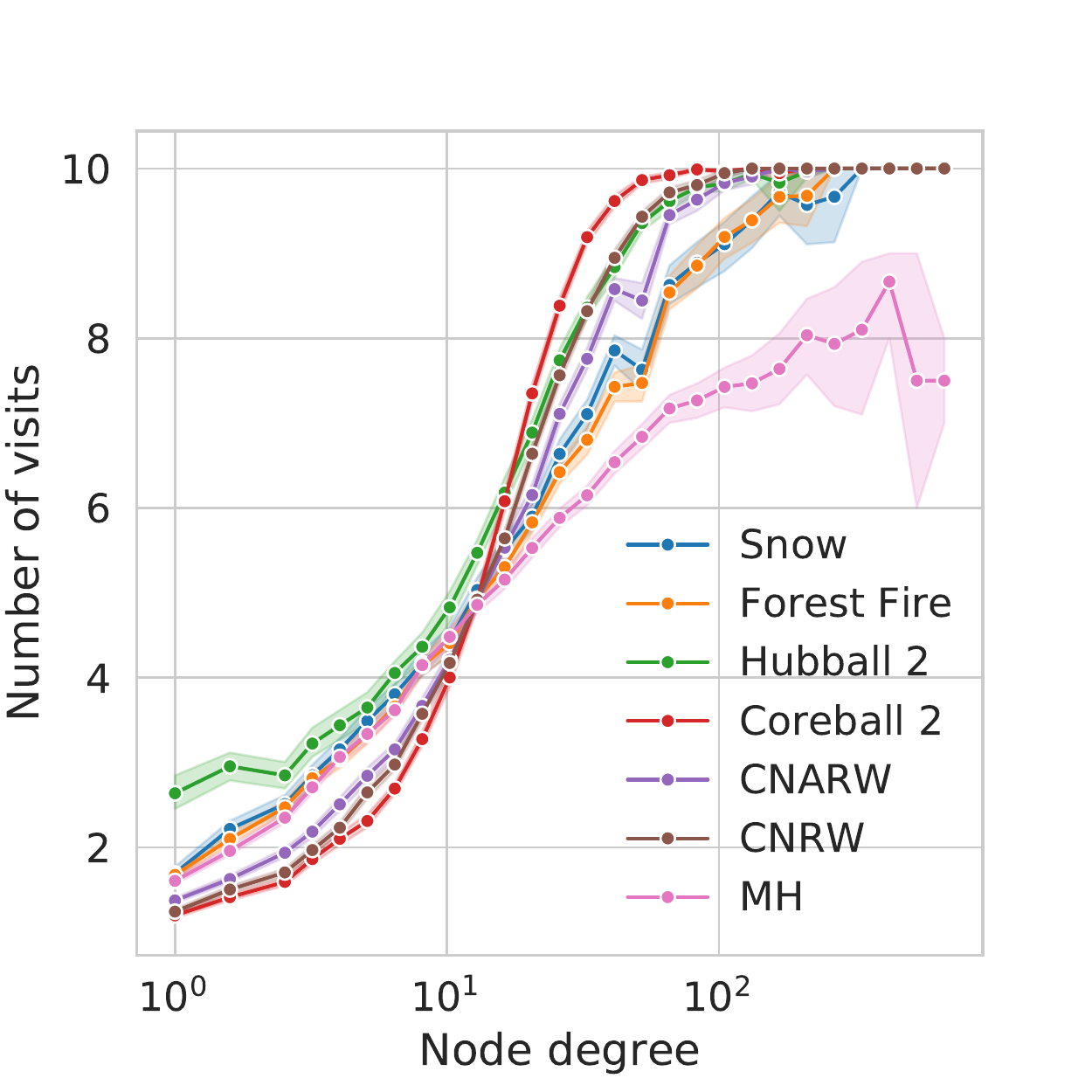}
\caption{}
\end{subfigure}
\caption{ Probability to visit a node with respect to its degree on the Facebook graph for several sampling schemes. Fireball and Edgeball methods have been omitted from the graphs for clarity. The experiment starts using four randomly-selected initial nodes, and is repeated 10 times. The Snowball result is plotted as an indicator of the value for the non-random case, with probability 1. Note that for the Snowball, although the propagation is non-random, the initial starting points are randomized between each run. (\textbf{a}) Ball with 2000 nodes (approx. 4 layers of Spikyball) and a selection of 10\% of the nodes at each layer (\textbf{b}) Ball with 8000 nodes (approx. 8 layers) and the same selection rate. On the left, most of the nodes were visited only once during the 10 tests except for the highest degree ones (above 100 connections). The large 95\% confidence interval indicates high variations in the results. On the right, The curve shows a more robust outcome with lower fluctuations and high probability to be visited when a node has a degree larger than 50 (more than 8 visits out of 10). Spikyball-based methods behave in a stabler manner than alternative sampling schemes. }
\label{fig:probavisits}
\end{figure}   


{\modif 
\section{Discussion}

\noindent{\bf Exploration of synthetic random graphs:}
On random graphs, the theoretical and experimental results show that the Snowball, Forest fire and Spikyball explorations methods are very similar. They are based on the same principle. We can expect the same sampling quality while having the option to slightly bend the degree distribution with the Coreball: positive values of $\gamma$ lead to the collection of more high degree nodes and less weakly connected nodes. 

\noindent{\bf Exploration of real social networks:}
In real social networks, the difference in sampling among the Spikyball variants is much more visible. The connections between nodes do not follow simple rules as in the case of synthetic random networks. The degree of a node often has an influence on the degree of its neighbors. Indeed, the change in degree distribution for different parameters of the Hubball family reveals, through Thm~\ref{thm:th2}, a relationship between the degree of neighbor nodes. Combining the results of Thm~\ref{thm:th2} and Sec.~\ref{sec:expspikyexp} (showing that negative $\alpha$ implies more small degree nodes sampled), we can say that nodes with a small degree tend to be connected more frequently to other small degree nodes in the Facebook network. High degree nodes tend to be equally connected to high degree and small degree nodes. This reveal the absence of a hierarchy, like a "rich club" where high degree nodes would connect preferentially with high degree nodes.
Among the possible explanations, the way the Facebook platform is designed does not influence the friendship between active users. Users will connect to their friends in real life independently of their activity in the social network.

The variety of the results obtained for the Spikyball family demonstrates that it is a convenient tool for shaping the sampling distribution. A few parameters control the ability to collect high degree or central nodes or a more balanced mix with the sampling of more peripheral nodes. All the samplings have a good exploration behavior, visiting as many communities as the other state-of-the-art sampling methods.

Concerning the Coreball, the parameter $\gamma$ has a clear impact on the distribution with high values of $\gamma$ favoring the collection of high degree nodes and hubs. This is demonstrated both by the theoretical results and the experiments on degree distribution, IVIP score, PageRank ratio, and density. 
These results also show that explorations based on a sampling that takes into account the number of connections the neighbors have with the nodes at layer $k$, without knowing their exact degree, lead to an efficient compromise. This partial degree estimation done with the Coreballs does not require to query the exact node degree. From the results, it is a good proxy for a node real degree. It avoids an expensive increase in the number of requests to the social network API.

When exploring a social network, it is desirable to visit and sample less weakly connected nodes. These nodes are so numerous that they mask the important activity without contributing much to the sampled data. In this context, Coreball 2 is the best sampling strategy as the number of sampled weakly connected nodes is decreased by an order of magnitude compared to the standard Snowball or Forest Fire.

The standard graph sampling focuses on a faithful representation of the initial graph. In that case, a random sampling is good as long as it preserves the global graph properties. However, for some applications, it may be necessary to sample key nodes, which makes the random sampling approach ineffective because some of these key nodes may be missed by the random collection process. The experiments show that the Coreball~2 is able to capture hubs and high degree nodes that correspond to key nodes in social networks with a high probability. Hence, Coreball~2 proves to be a robust sampling approach that focuses on the central part of a graph, which can be useful in other applications beyond social networks.

}

{\modif
\section{Conclusion}

The Spikyball is a generalization of several exploration sampling schemes. The analysis of its properties, in particular the distribution of degrees of the collected nodes, sheds more light on these approaches. Notably, any of these methods will lead to an equivalent sampling on synthetic random networks. However, when sampling a real network, different approaches may lead to significant discrepancies among the resulting sampled networks.


Its flexibility allows shaping the degree distribution of the sampled graph in a simple manner for a wide range of applications. Depending on the focus of the research, parameters can be chosen to analyze weakly connected nodes, to obtain a high fidelity sub-sampling of a graph or to study the hubs and influencers in social networks.
Potential applications of the Spikyball go beyond the scope of social networks, to any large graph where explorations are difficult due to the overwhelming number of nodes and edges. In particular, the Coreball family is able to sample efficiently the "core" of a graph containing its highly connected nodes.

In addition, this promising approach opens new research directions for the analysis and characterization of real-world attributed networks. The Spikyball is presented as a general framework where the exploration rules can be redefined depending on the application. Instead of choosing the degree as the measure of the importance of a node, one could use attributes associated to the nodes. The exploration would be guided by these attributes, revealing new information from the combination of the graph structure and attributes.
}
\vspace{6pt}

\bibliographystyle{unsrt}
\bibliography{biblio}


\end{document}